\documentclass[10pt,aps,pra,twocolumn,floatfix,nofootinbib,superscriptaddress,longbibliography]{revtex4-2}
\usepackage{bm,graphicx,mathrsfs,amsmath,amssymb,mathtools,makecell,bbm,amsthm,dsfont,times,txfonts,nicefrac,framed,enumitem,tikz,physics,lipsum,nicematrix,halloweenmath,tabularray,wrapfig}
\usepackage[most]{tcolorbox}
\usepackage{pifont}
\usepackage{etoolbox}
\usepackage{xargs}

\usepackage[colorlinks=true,linkcolor=equationcolor,citecolor=refcolor]{hyperref}
\hypersetup{urlcolor=refcolor}

\definecolor{equationcolor}{RGB}{222,94,100}
\definecolor{refcolor}{RGB}{141,138,249}
\definecolor{changescolor}{RGB}{188,104,104}
\definecolor{alecolor}{RGB}{198,113,190}
\newenvironment{sproof}{%
  \proof}{\endproof}

\makeatletter
\def\blfootnote{\gdef\@thefnmark{}\@footnotetext}
\makeatother

\renewcommand{\S}{\ms{S}}

\renewcommand{\v}[1]{\ensuremath{\boldsymbol #1}}
\newcommand{\ms}[1]{\textsf{#1}}

\newcommand{\iden}{\mathbbm{1}}
\newcommand{\stab}{\mathrm{STAB}_1}

\def\X{ {\ms X} }

\def\E{ {\ms E} }
\def\M{ {\ms M} }

\def\P{ {\cal P} }

\newcounter{globalboxcounter} 
\renewcommand{\theglobalboxcounter}{\arabic{globalboxcounter}}

\newtcolorbox[use counter=globalboxcounter]{mybox}[2][]{%
    breakable,
    enhanced,
    sharp corners,
    colback=white,
    colframe=black!50,
    colbacktitle=white,
    coltitle=black,
    fonttitle=\normalfont,
    boxrule=0.25mm,
    arc=0mm,
    width=\linewidth,
    boxsep=1mm,
    left=1mm,
    right=1mm,
    top=1mm,
    bottom=1mm,
    before skip=6pt,
    after skip=6pt,
    before upper=\strut,
    title={\centering\strut \textbf{Example} {\normalfont\theglobalboxcounter: #2}},
    #1
}

\newtheorem{thm}{Theorem}

\newtheorem{lem}[thm]{Lemma}
\newtheorem{prop}[thm]{Proposition}
\newtheorem{cor}[thm]{Corollary}

\begin{document}
\title{Every Little Thing Heat Does Is Magic }

\author{Rafael A. Macêdo}
    \affiliation{Physics Department, Federal University of Rio Grande do Norte, Natal, 59072-970, Rio Grande do Norte, Brazil}
    \affiliation{International Institute of Physics, Federal University of Rio Grande do Norte, 59078-970, Natal, RN, Brazil}
\author{A. de Oliveira Junior}
	\affiliation{Center for Macroscopic Quantum States bigQ, Department of Physics,
Technical University of Denmark, Fysikvej 307, 2800 Kgs. Lyngby, Denmark}
\author{Naim E. Comar}
    \affiliation{International Institute of Physics, Federal University of Rio Grande do Norte, 59078-970, Natal, RN, Brazil}
\author{Luna Lima Keller}
    \affiliation{International Institute of Physics, Federal University of Rio Grande do Norte, 59078-970, Natal, RN, Brazil}
\author{Jonatan Bohr Brask}
	\affiliation{Center for Macroscopic Quantum States bigQ, Department of Physics,
Technical University of Denmark, Fysikvej 307, 2800 Kgs. Lyngby, Denmark}
\author{Lucas C. C\'eleri}
\affiliation{QPequi Group, Institute of Physics, Federal University of Goi\'as, Goi\^ania, Goi\'as, 74.690-900, Brazil}
\author{Rafael Chaves}
    \affiliation{International Institute of Physics, Federal University of Rio Grande do Norte, 59078-970, Natal, RN, Brazil}
    \affiliation{School of Science and Technology, Federal University of Rio Grande do Norte, Natal, Brazil}
\date{\today}

\begin{abstract}
How can one certify that an unknown quantum state possesses magic without resorting to full state tomography? We address this question by introducing two thermodynamic witnesses that rely solely on energy and heat measurements. First, we define the stabilizer ground‑state energy as the lowest energy achievable by any stabilizer state, and the stabilizer gap as the separation between this value and the true ground‑state energy. Any state whose energy lies below the stabilizer ground‑state energy is therefore necessarily nonstabilizer. This leads to a direct witness of magic using only average-energy measurements. To overcome the limitations when direct energy measurements are inconclusive, we further develop a nonlinear witness based on heat exchange with a thermal ancilla. Specifically, we derive fundamental bounds on heat that are satisfied by all stabilizer states; therefore, their violation certifies the presence of magic. We demonstrate the effectiveness of our approach through several examples, ranging from few-body systems where heat exchange reveals nonstabilizerness even when energy measurements alone fail, to the transverse-field Ising chain, where the stabilizer gap becomes maximal at the quantum critical point.
\end{abstract}

\maketitle

\section{Introduction}

A fundamental problem in quantum science is determining whether a quantum device works as intended. Across many areas of quantum information, we are faced with the task of characterizing states of interest. In principle, one could perform full quantum state tomography~\cite{James2001,Banaszek2013} and obtain a complete description of the state at hand. In practice, such a procedure quickly becomes infeasible as the size of the system increases. A complete description requires a number of parameters that scale exponentially with system size, and so does the experimental effort required for reconstruction. This curse of dimensionality forces us to look for alternative approaches that can certify relevant properties without requiring full reconstruction~\cite{Gross2010,Aaronson2018,Huang2020,Koutn2022}.

Two important instances of this general problem, each plagued by the curse of dimensionality, are entanglement and nonstabilizerness. Entanglement captures nonclassical correlations and is a hallmark of quantum physics~\cite{Horodecki2009}; nonstabilizerness is the holy grail that allows one to speak about universal quantum computation. Together both allow one to distinguish genuine quantum advantage~\cite{gottesman1997stabilizercodesquantumerror,Aharonov1997,gottesman1998,aaronson2004improved}. When it comes to deciding whether a state is entangled or not, entanglement theory ran into this wall of dimensionality a long time ago. Consequently, a plethora of inequivalent criteria and measures are available for the detection and characterization of entanglement~\cite{peres1996separability,Horodecki1996,Wootters98,Wootters2001,bruss2002,Plenio2005}. In particular, entanglement witnesses form a rich and well-developed framework~\cite{terhal2001family,lewenstein2000optimization,guhne2009entanglement}. On the other hand, while the resource theory of nonstabilizerness has a solid theoretical foundation~\cite{Veitch2014, howard2017application,Leone2024}, the certification and detection of magic have only recently attracted significant attention, with a variety of new methods beginning to emerge~\cite{Leone2023,Tirrito2024,wagner2024certifyingnonstabilizernessquantumprocessors,Macedo2025,Warmuz2025,varela2026predictingmagicmeasurements,leone2026unbearablehardnessdecidingmagic,Zamora_2025}.

Among these approaches, one particularly appealing idea is to use the Hamiltonian itself as a witness, reducing the problem to estimating a single, physically meaningful observable. For example, in the case of entanglement, if no separable state can reach the measured energy, then the state must be entangled~\cite{Tth20051,Tth2005,Guhne2005,Lidar2005,Ghne2006}. This leads to the notion of an entanglement gap~\cite{Dowling2004}, which quantifies the energy range that is inaccessible to separable states. This motivates the central question addressed in this work: can nonstabilizerness be witnessed using only energy expectation values of a given Hamiltonian?

We answer this question in the affirmative. Motivated by entanglement theory and thermodynamics, we introduce energy- and heat-based witnesses of nonstabilizerness that certify magic using experimentally accessible observables. More precisely, we introduce the notion of the stabilizer gap, defined as the difference between the energy of the ground-state and the minimum energy achievable by any stabilizer state. If the system's energy lies below the stabilizer threshold, the state is necessarily nonstabilizer. Consequently, witnessing nonstabilizerness requires only direct energy measurements. However, this method is blind to Hamiltonians whose ground spaces contain stabilizer states. To overcome this limitation, we develop a nonlinear witness based on heat exchange between the system and a thermal ancilla. The corresponding bounds are derived in a memory-assisted thermodynamic framework and are therefore respected a fortiori by standard thermal operations without memory. Violations of these bounds certify nonstabilizerness. Operationally, the experimental input is the energetic pair $(E_0,Q)$ that consists of the system's average energy and the measured heat exchange. We illustrate these methods with a range of representative examples. For the energy witness, we analyze the transverse field Ising chain, where the stabilizer gap peaks at the quantum critical point and vanishes in the limits of vanishing or infinite transverse field, showing that the ground state's magic is thermodynamically detectable via its energy alone. For the heat witness, we examine noisy and dephased families of states with fixed average energy--where direct energy measurements are inconclusive--and demonstrate violations of the stabilizer heat bounds beyond the single‑qubit case.

The paper is organized as follows. In Sec.~\ref{S:stabilizer-resource-theory}, we review the stabilizer resource theory, define the stabilizer polytope, and introduce the class of stabilizer Hamiltonians. Section~\ref{S:thermodynamic-framework} presents the thermodynamic framework and derives the optimal heat exchange bounds that will serve as our witnesses. In Sec.~\ref{S:measurement-witnessing}, we introduce the two measurement models--direct energy measurement and indirect heat measurement--and define the stabilizer gap and the stabilizer heat bounds. Section~\ref{S:stabilizer-energy-witness} analyzes the direct energy witness in detail, including the stabilizer gap, its properties, examples, and algorithms for its computation. Section~\ref{S:stabheatwit} focuses on the heat‑based witness, providing a complete geometric characterization for single qubits, optimality conditions, and a three‑qubit example showing detection even when direct energy measurements fail. We conclude with an outlook in Sec.~\ref{S:outlook}. Technical derivations, including the reduction to commuting Pauli subsets and the degenerate perturbation analysis, are given in the appendices.

\section{Stabilizer resource theory}
\label{S:stabilizer-resource-theory}

We start by reviewing the basics of stabilizer resource theory. 

For a system of $n$ qubits with Hilbert space \mbox{$\mathcal H=(\mathbb C^2)^{\otimes n}$}, we denote by $\widetilde{\mathcal P}_n := \{\pm 1,\pm i\}\cdot\{\iden,X,Y,Z\}^{\otimes n}$ the $n$-qubit Pauli group and by $\mathcal P_n := \{P\in \widetilde{\mathcal{P}}_n : P^\dagger= P\}$ the set of all Hermitian elements of the $n$-qubit Pauli group. Since all Hamiltonians considered here are Hermitian, Pauli expansions will use elements of $\mathcal P_n$. 

A stabilizer group $\mathbf S$ is an Abelian subgroup of $\widetilde{\mathcal P}_n$ that does not contain the element $-\iden$. This group defines a stabilizer code as the subspace $V_\mathbf S \equiv \{\ket{\psi} \in \mathcal H\;|\;s \ket{\psi} =+\ket{\psi}, \;\forall s \in \mathbf S\}$, with dimension $\dim V_\mathbf S =2^{n-\rank(\mathbf S)}$. Here, $\rank(\mathbf S)$ denotes the number of independent generators of $\mathbf S$. When $\rank(\mathbf S)=n$, the code is one dimensional, and we say that $\mathbf S$ stabilizes a unique stabilizer state, denoted $\ket{\mathbf S} \in \mathcal H $. Any stabilizer group can be compactly described by a set of generators $\mathrm{gen}(\mathbf S)$. The group $\mathbf{S}$ itself is then the set of all products of these generators, written as $\langle \mathrm{gen}(\mathbf{S}) \rangle$.

The set of rank-$n$ stabilizer groups, referred to as maximal stabilizer groups, is in one-to-one correspondence with the set of pure stabilizer states. Let $\operatorname{ext}(\textrm{STAB}_n)$ denote the set of all such pure stabilizer states $\ket{\mathbf S}$. The stabilizer polytope is then defined as the convex hull of these extremal points:
\begin{align}
    \mathrm{STAB}_n &\equiv \mathrm{conv} \{\ketbra{\mathbf S}\}_{\mathbf S \in \mathrm{ext}(\mathrm{STAB}_n)} \nonumber\\
    &= \qty{\sum_{\mathbf S \in \mathrm{ext}(\mathrm{STAB}_n)}p_\mathbf S \ketbra{\mathbf S} : p_\mathbf S \geq 0, \; \sum_\mathbf S p_\mathbf S=1}\;.
\end{align}
A quantum state $\rho$ is a nonstabilizer state if $\rho \notin \mathrm{STAB}_n$. The number of extremal points, which coincides with the number of pure stabilizer states of $n$ qubits, is~\cite{aaronson2004improved}
\begin{equation}
    |\mathrm{ext}(\mathrm{STAB}_n)| = 2^n \prod_{k=1}^n (2^k+1) = 2^{O(n^2)}.
\end{equation}
Hence, establishing membership in the polytope for a sufficiently large number of qubits is difficult. Full state tomography requires data that scale exponentially with system size, and a naive vertex-based membership test or geometric optimization involves a search over $2^{O(n^2)}$ extremal stabilizer states. This is a common feature across mixed-state quantum resource theories~\cite{chitambar2019quantum}, since many resource measures reduce to geometric quantities in which one searches for an optimal convex decomposition of a quantum state in terms of the free set.

There is an alternative description of the stabilizer polytope in terms of its facets. More precisely, it admits the following ($H$-)representation:
\begin{equation}
    \mathrm{STAB}_n = \{\rho \in \mathcal S(\mathcal H)\;|\;\tr(A\rho) \geq 0,\; \forall A \in \mathrm{F}_n\}\;,
    \label{eq:Hrepstab}
\end{equation}
where $S(\mathcal H)$ denotes the set of density operators in $\mathcal{H}$ and $\mathrm{F}_n$ is a finite set of facet-defining Hermitian operators. These operators correspond to the vertices of the $\Lambda$-polytope, which is dual to $\mathrm{STAB}_n$~\cite{zurel2020hidden}. Although it is known that obtaining the facets ($H$-representation) from the vertex description ($V$-representation) of a polytope is NP-hard, this representation shows that, in principle, there exists a finite set of witnesses that characterizes the polytope, and hence these operators are natural candidates for detecting the resource.

\begin{mybox}{One- and two-qubit stabilizer polytopes \& facets}
For a single qubit $\rho(\v{r})=\tfrac12(\iden+\v{r}\cdot\v{\sigma})$, where $\v r= (r_x, r_y, r_z)$ is the Bloch vector with $\|\v r\|_2 \leq 1$ and $\v \sigma = (X,Y,Z)$, the stabilizer polytope is the octahedron
\begin{equation}
    \mathrm{STAB}_1=\mathrm{conv}\{\rho(\pm\v e_x),\rho(\pm\v e_y),\rho(\pm\v e_z)\},
\end{equation}
where $\v e_w$ are the Cartesian basis vectors in Bloch space. Equivalently, these six vertices are the eigenstates of the Pauli operators $X$, $Y$, and $Z$. A convenient set of facet operators is
\begin{equation}
    \mathrm{F}_1=\{\iden-\v f\cdot\v{\sigma} \mid \v f\in\{\pm1\}^3\},
\end{equation}
since $\tr[(\iden-\v f\cdot\v{\sigma})\rho(\v{r})]=1-\v f\cdot\v{r}\ge0$ $\forall\, \v f\in\{\pm1\}^3$ is equivalent to the octahedral condition $\|\v r\|_1\le1$.

\vspace{0.2cm}
For two qubits the stabilizer polytope has $|\mathrm{ext}(\mathrm{STAB}_2)|=60$ pure stabilizer states and $|\mathrm{F}_2|=22,\!230$ facets~\cite{zurel2020hidden,Palhares2026}. The high polyhedral symmetry of the one‑qubit case is lost. 
\end{mybox}

Since energy measurements play a central role below, it is useful to single out a special class of Hamiltonians within stabilizer resource theory. Given a stabilizer group $\mathbf S\subseteq\widetilde{\mathcal P}_n$ and a chosen generating set $\mathrm{gen}(\mathbf S)$, we define a stabilizer Hamiltonian associated with $\mathbf S$ by
\begin{equation}
H[\mathbf S,\mathrm{gen}(\mathbf S)] := -\sum_{g\in\mathrm{gen}(\mathbf S)} g.
\label{Eq:stabilizer-Hamiltonian}
\end{equation}
Notice that $H[\mathbf S,\mathrm{gen}(\mathbf S)]$ depends on the chosen independent generating set. Different choices of generators for the same stabilizer group $\mathbf S$ may lead to different Hamiltonians, although they all share the same ground space $V_{\mathbf S}$. More generally, we say that $H\in \operatorname{Hem}(\mathcal{H})$ is a stabilizer Hamiltonian if there exists a stabilizer group $\mathbf S$ and an independent generating set $G=\operatorname{gen}(\mathbf S)$, such that $H=H[\mathbf S,G]$.

\section{Thermodynamic framework \label{S:thermodynamic-framework}}

In this section, we introduce the thermodynamic framework used throughout the paper. We first recall the resource-theoretic description based on thermal operations and then extend it to a memory-assisted scenario in which a quantum memory allows one to exploit energy coherences to enhance heat exchange~(see~\cite{Gour2015,Lostaglio2019,deOliveiraJunior2022} for a detailed discussion about resource theory of thermodynamics). This leads to fundamental bounds on the heat that can be exchanged in cooling and heating processes, which we will later connect to the presence of nonstabilizerness.

\subsection{Thermodynamic setting} \label{S:thermodynamic-setting}
We consider a system $\ms S$, described by a Hamiltonian $H_{\ms S}$ and an unknown state $\rho_{\ms S}$, along with a thermal environment $\ms{E}$ with Hamiltonian $H_{\ms E}$ at inverse temperature $\beta$. The environment is in a thermal Gibbs state,
\begin{equation}
    \gamma_{\ms E} = \frac{e^{-\beta H_{\ms E}}}{\tr(e^{-\beta H_{\ms E}})}.
\end{equation}
The joint system is assumed to be closed and to evolve unitarily, with the only constraint being that the overall energy is conserved. Hence, the set of thermodynamic transformations is modeled by thermal operations~\cite{Janzing2000,horodecki2013fundamental}, defined as completely positive trace-preserving maps $\mathcal{E}$ that can be realized by coupling $\ms S$ to a thermal environment $\ms E$ via an energy-preserving unitary. Formally, for any input state $\rho_{\ms S}$ of the system,
\begin{equation}
    \mathcal{E}(\rho_{\ms S})  := \tr_{\ms E} \qty[U\qty(\rho_{\ms S} \otimes \gamma_{\ms E})U^{\dagger}],
    \label{Eq:thermal-operation}
\end{equation}
where $U$ is a joint energy-preserving unitary on $\ms S \otimes \ms E$ satisfying $[U, H_{\ms S} \otimes \iden_{\ms E} + \iden_{\ms S} \otimes H_{\ms E}]=0$. The joint system-bath state after the interaction $U$ is denoted by $\eta_{\ms{SE}}:=U(\rho_{\ms S}\otimes \gamma_{\ms E})U^{\dagger}$. Thus, the corresponding marginal states are $\eta_{\ms S} := \tr_{\ms E}(\eta_{\ms{SE}})$ and $\eta_{\ms E} := \tr_{\ms S}(\eta_{\ms{SE}})$.

Since the global unitary $U$ conserves the total energy, the energy lost by $\ms S$ equals the energy gained by $\ms E$. Defining the average energy change of the system and environment as $\Delta E_{\ms S} := \tr[H_{\ms S}(\eta_{\ms S} - \rho_{\ms S})]$ and $\Delta E_{\ms E} := \tr[H_{\ms E}(\eta_{\ms E} - \gamma_{\ms E})]$, energy conservation implies $\Delta E_{\ms S} + \Delta E_{\ms E} = 0$. We regard the energy gained by the environment as heat and define
 \begin{equation}
     Q(\rho_{\ms S}) := \Delta E_{\ms E} = \tr[H_{\ms E}(\eta_{\ms E} - \gamma_{\ms E})]. \label{Eq:heat-env-def}
 \end{equation}
Due to energy conservation, this is equal to minus the energy change of the system $Q(\rho_{\ms S}) = -\Delta E_{\ms S}$. Thus, a positive value of $Q(\rho_{\ms S})$ corresponds to the heat flowing from $\ms S$ into the environment. In the following, we will focus on $Q(\rho_{\ms S})$ and view the environment $\ms E$ as a thermal ancilla whose energy change we can measure. 

A more specific structural constraint comes from energy conservation: since $[U, H_{\ms S}+H_{\ms E}]=0$, the unitary $U$ is block-diagonal in the eigenspaces of the total Hamiltonian. As a consequence, coherences between different total-energy blocks do not contribute to changes of local average energies under energy-preserving dynamics (see, e.g., Appendix B of Ref.~\cite{lipkabartosik2023fundamental}). In the standard setting of thermal operations, the bath is initially energy-diagonal, so any coherence of $\rho_{\ms S}$ between distinct eigenenergies contributes only to off-block terms of $\rho_{\ms S}\otimes\gamma_{\ms E}$ and therefore cannot affect $\Delta E_{\ms S}$. Hence, for thermal operations, the average heat exchange depends only on the energy populations of $\rho_{\ms S}$.

This limitation can be overcome by extending the setting of thermal operations to include a quantum memory $\ms M$. We adopt the memory-assisted framework of Ref.~\cite{de2025heat}, where $\ms M$ allows one to access otherwise thermodynamically inaccessible coherences without supplying net energy. Formally, we introduce an additional system $\ms M$ with Hamiltonian $H_{\ms M}$, initially prepared in an arbitrary state $\rho_{\ms M}$ and uncorrelated with $\ms S$ and $\ms E$. We then consider thermal operations on the composite system $\ms S \otimes \ms M \otimes \ms E$, with $U$ acting jointly on all three subsystems and satisfying the original energy-conservation condition, now extended to include the memory. We impose one additional crucial constraint: The memory must be returned to its initial state, $\eta_{\ms M} := \tr_{\ms{SE}}[U(\rho_{\ms S}\otimes\rho_{\ms M}\otimes\gamma_{\ms E})U^{\dagger}] = \rho_{\ms M}$. In other words, we allow the final global state $\eta_{\ms{SME}}$ to be correlated, and the only requirement is that the memory is returned locally unchanged. Since $\eta_{\ms M}=\rho_{\ms M}$, the average energy of the memory is unchanged, $\Delta E_{\ms M}=0$. Together with $[U,H_{\ms S}+H_{\ms M}+H_{\ms E}]=0$, this implies $\Delta E_{\ms E}=-\Delta E_{\ms S}$ and, hence, $Q(\rho_{\ms S})=\tr[H_{\ms S}(\rho_{\ms S}-\eta_{\ms S})]$. Therefore, the memory does not become an extra source or sink of energy.

In this memory-assisted regime, energy coherences that were thermodynamically silent can now be exploited. The quantum memory, which may contain phase information or correlations with respect to its own Hamiltonian, provides a reference that allows the global unitary $U$ to convert system coherences into population changes in $\ms S$ and $\ms E$. Specifically, because the memory is returned unchanged, interactions can effectively transfer the informational content of the coherences into population changes, changing the heat exchange. As a result, the achievable heat exchange $Q(\rho_{\ms S})$ can exceed the limits imposed by purely classical or incoherent protocols~\cite{lipkabartosik2023fundamental, Czartowski2024,de2025heat}. One can then ask: what are the fundamental limits on the heat that can be absorbed or released by the environment when such a quantum memory is available?

\subsection{Optimal heat exchange}

To address this question, we quantify the enhancement by defining optimal heat exchange as the extremal value of $Q(\rho_{\ms S})$ achievable under the memory-assisted thermal operations framework. Following Ref.~\cite{de2025heat}, we consider the optimal heat that can be exchanged with the environment in cooling and heating processes, under the constraints of energy-preserving dynamics and a cyclic memory. We define the minimal and maximal heat achievable for a given input $\rho_{\ms S}$ as
\begin{align}
\label{Eq:app-optimal-heat}
\begin{split}
    Q_{c/h}(\rho_{\ms S}) :=   \underset{H_{\ms E},\, H_{\ms M}, \, U, \, \rho_{\ms{M}}}{\min/\max} \:\: &\tr[H_{\ms{E}}(\eta_{\ms{E}} - \gamma_{\ms{E}})], \\ 
\textrm{s.t.}  \quad\quad &[U, H_{\ms{S}} +H_{\ms{M}} + H_{\ms{E}}] = 0, \\   &\eta_{\ms M}=\rho_{\ms{M}}. 
\end{split}
\end{align}
Note that optimization ranges over all environment and memory Hamiltonians, all memory states $\rho_{\ms{M}}$, and all energy-preserving unitaries $U$. The minimum (maximum) corresponds to optimal cooling (heating) of the environment. These quantities $Q_{c}(\rho_{\ms S})$ and $Q_{h}(\rho_{\ms S})$ represent fundamental bounds on heat exchange within the memory-assisted thermal operations framework: no protocol satisfying energy conservation and the memory constraint can cool or heat the environment beyond these limits. With our sign convention, a positive $Q$ denotes heat deposited in the environment. Therefore, cooling the environment corresponds to $Q<0$, i.e., minimizing $Q$, while heating corresponds to maximizing $Q$.

At first sight, the optimization in Eq.~\eqref{Eq:app-optimal-heat} looks untractable. Remarkably, using techniques from the resource theory of thermodynamics and catalytic thermal operations~\cite{Brandao2013,LipkaBartosik2024,Datta2023}, it can be reduced to a much simpler problem and solved in a closed form. Let $F_\beta(\rho) := \tr[H_{\ms S} \rho] - \beta^{-1} S(\rho)$ denote the nonequilibrium free energy of a state $\rho$ at inverse temperature $\beta$, with $S(\rho) := -\tr(\rho \log \rho)$ the von Neumann entropy, and let $\gamma_{\ms S}(x)$ be the Gibbs state of $\ms S$ at inverse temperature $x$. One can show that the optimal final states are Gibbs-form states $\gamma_\S(\beta_{c/h})$. Defining the average energy $E(\rho) := \tr(H_{\ms S}\rho)$, one finds the following compact expression for the solution of Eq.~\eqref{Eq:app-optimal-heat}:
\begin{equation}
    Q_{c/h}(\rho_{\ms S})
    = E(\rho_{\ms S}) - E[\gamma_{\ms S}(\beta_{c/h})].
    \label{Eq:QcQh-solution}
\end{equation}
Intuitively, the optimal protocol maps the system to a thermal state $\gamma_{\ms S}(\beta_{c/h})$, and the optimal heat is just the difference between the initial and final average energy of the system. The nontrivial part of the optimization is encoded in the effective temperatures $\beta_{c/h}$.

These effective inverse temperatures are determined by the condition that the free energy of the final Gibbs state equals the free energy of the initial state, i.e. $F_\beta[\gamma_{\ms S}(\beta_{c/h})] = F_\beta(\rho_{\ms S})$. For finite-dimensional systems, the function $x \mapsto F_\beta[\gamma_{\ms S}(x)]$ is convex. Therefore, for a given $\rho_{\ms S}$, equation $F_\beta[\gamma_{\ms S}(x)] = F_\beta(\rho_{\ms S})$ can have zero, one, or two solutions. The cooling optimum $\beta_c$ is the lower inverse temperature solution (if it exists), and the heating optimum $\beta_h$ is the higher. In particular, in some parameter regimes, the equation has no solution for the heating branch, and then the optimum is attained at the energetic extreme (see Appendix A of~\cite{de2025heat}). In equilibrium [$\rho_{\ms S}=\gamma_{\ms S}(\beta)$], the two solutions coincide with the inverse temperature of the bath $\beta$. The corresponding heat is given by Eq.~\eqref{Eq:QcQh-solution}.

The optimal values $Q_{c/h}(\rho_{\ms S})$ depend on the system's nonequilibrium free energy $F_\beta(\rho_{\ms S})$ and thus on both its energy and its entropy. This dependence can be turned into a tool for witnessing quantum resources when we constrain the initial state $\rho_{\ms S}$ to a subset $\mathcal{S}$ of operationally relevant states (e.g., separable, incoherent, or stabilizer states). For that, just define the set-dependent bounds $Q_c(\mathcal{S}):=\min_{\rho\in\mathcal{S}}Q_c(\rho)$ and $Q_h(\mathcal{S}):=\max_{\rho\in\mathcal{S}}Q_h(\rho)$. By construction, any state $\rho\in\mathcal{S}$ gives a heat exchange $Q$ satisfying $Q_c(\mathcal{S}) \leq Q \leq Q_h(\mathcal{S})$. The contrapositive provides the witnessing power: if an experimental measurement of heat exchange yields a value $Q$ such that $Q < Q_c(\mathcal{S})$ or $Q > Q_h(\mathcal{S})$, it certifies that the initial state $\rho_{\ms S}$ possessed resources beyond those available in $\mathcal{S}$---for instance, nonstabilizerness when $\mathcal{S}$ is the set of stabilizer states. Importantly, computing $Q_{c/h}(\mathcal{S})$ for a specific set $\mathcal{S}$ is a separate optimization problem that depends on the structure of $\mathcal{S}$ and the Hamiltonian of the system $H_{\ms S}$.

\section{Measurement-based witnessing of nonstabilizerness} \label{S:measurement-witnessing}

We now address the central operational question of this work: \emph{how can an experimenter certify that an unknown state $\rho_{\ms S}$ lies outside the stabilizer polytope when only limited (energetic) measurements are available?}

In principle, one could perform full quantum state tomography and solve the membership problem $\rho_\S \in \mathrm{STAB}_n$. However, as discussed in Section~\ref{S:stabilizer-resource-theory}, this approach is infeasible for large systems due to the exponential cost of tomography and the super-exponential size of the stabilizer polytope. We therefore focus on witnessing nonstabilizerness using observables that arise naturally in thermodynamic experiments.
\begin{figure*}
    \centering
    \includegraphics{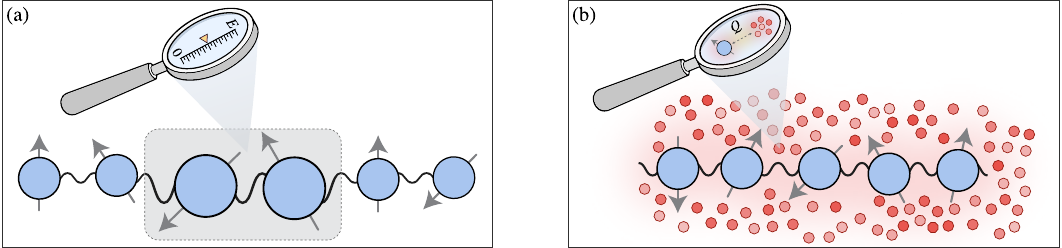}
        \caption{\textbf{Measurement models for witnessing nonstabilizerness.} (a) \emph{Direct energy measurement}: The observer measures the system's energy directly via the Hamiltonian $H_S$. (b) \emph{Indirect heat measurement}: The system interacts with a thermal environment through an energy-conserving unitary $U$, with assistance from a quantum memory. The observer measures only the heat $Q$ deposited in the environment. In both cases, stabilizer states obey bounds ($E_\mathrm{STAB}$ and $Q_\mathrm{STAB}$, respectively) that can be exceeded only by nonstabilizer states.}
    \label{F-measurement-models}
\end{figure*}
We consider a $n$-qubit system $\ms S$ with Hamiltonian $H_{\ms S}$. The experimenter has access to one of the following measurement models (Fig.~\ref{F-measurement-models}):
\begin{enumerate}
    \item [(a)] \emph{Direct energy measurement}: estimate the average energy $E(\rho_{\ms S}) := \tr(H_{\ms S}\rho_{\ms S})$.
    \item [(b)] \emph{Indirect heat measurement}: couple $\ms S$ to a thermal environment and estimate the average heat $Q$ exchanged with the thermal ancilla. 
\end{enumerate}
When a large number of qubits are involved, not all Hamiltonians can be measured. A natural restriction is to consider what are called $k$-local Hamiltonians, specified by a Pauli expansion of the form:
\begin{equation}\label{Eq:Hamiltonian-pauli-decomposition}
    H_\ms S = -\sum_{P \in \mathcal{P}_n} w_P P\;,
\end{equation}
 
Since $H_\ms S$ must be Hermitian and $\mathcal{P}_n$ already contains both $P$ and $-P$, $w_P$ must be real numbers and we may absorb the sign of each coefficient in the corresponding Pauli string; define $\mathbf P(H_{\ms S}):=\{P\in \mathcal{P}_n:w_P\neq 0\}\subseteq \mathcal P_n$, and let us assume without loss of generality that $w_P>0$ for all $P\in \mathbf P(H_{\ms S})$. We say $H_\ms S$ is $k$-local if two constraints are imposed:
\begin{enumerate}
    \item \emph{Sparsity:} $|\mathbf P(H_{\ms S})|=\mathrm{poly}(n)$, i.e., only $\mathrm{poly}(n)$ coefficients $w_P$ are non-zero.
    \item \emph{Locality:} $\max_{P\in \mathbf P(H_{\ms S})}|\mathrm{supp}(P)|=k=O(1)$, where $\mathrm{supp}(P)$ is the set of qubits on which $P$ acts nontrivially. That is, every Pauli string $P$ with $w_P \neq 0$ acts non-trivially on at most $k=O(1)$ qubits. 
\end{enumerate}
Under these assumptions, $E(\rho_{\ms S})$ can be estimated by measuring the local terms $\{P\}_{P\in\mathbf P(H_{\ms S})}$ and combining the results with the known coefficients $\{w_P\}$, thus avoiding the curse of dimensionality in the data required by the full tomography approach.

\subsection{Direct energy measurement}\label{SS:direct-measurement}

In the direct model [see Fig.~\hyperref[F-measurement-models]{\ref{F-measurement-models}(a)}] the experimenter estimates $\tr(H_{\ms S}\rho_{\ms S})$. A standard construction in resource theories is to compare a measured expectation value with its extremal value in the free set~\cite{chitambar2019quantum}. We define the stabilizer ground state energy as
\begin{equation}\label{Eq:stabilizer-ground-energy}
    E_\mathrm{STAB}(H_S) \equiv \min_{\rho \in \mathrm{STAB}_n} \tr(H_S \rho).
\end{equation}
Because $\mathrm{STAB}_n$ is convex, the minimum is attained in an extremal stabilizer state. The Hermitian operator
\begin{equation}
   W_H := H_{\ms S}-E_{\mathrm{STAB}}(H_{\ms S}) \iden,
\end{equation}
is therefore a linear witness of nonstabilizerness. For all stabilizer states $\sigma\in\mathrm{STAB}_n$ one has $\tr(W_H\sigma)\ge 0$, while any state $\rho_{\ms S}$ satisfying
\begin{equation}
    \tr(W_H\rho_{\ms S})<0 \quad \Longleftrightarrow \quad \tr(H_{\ms S}\rho_{\ms S})<E_{\mathrm{STAB}}(H_{\ms S})
    \label{eq:witness-stab}
\end{equation}
is certified to be nonstabilizer.

The usefulness of this witness is quantified by the \emph{stabilizer gap}, 
\begin{equation}\label{Eq:stabilizer-gap}     
    \delta_{\mathrm{STAB}}(H_{\ms S}) \equiv E_{\mathrm{STAB}}(H_{\ms S})-E_{\mathrm{gs}}(H_{\ms S}), 
\end{equation} 
where $E_{\mathrm{gs}}(H_{\ms S}) \equiv \min_{\rho\in\mathcal{S}(\mathcal{H})}\tr(H_{\ms S}\rho)$ is the true ground-state energy of $H_{\ms S}$. The quantity $\delta_{\mathrm{STAB}}(H_{\ms S})$ measures the extent to which the lowest-energy sector of $H_{\ms S}$ is inaccessible to the stabilizer states. In particular, when $\delta_{\mathrm{STAB}}(H_{\ms S})>0$, a sufficiently low energy certifies nonstabilizerness. The following proposition collects the basic properties of this quantity:
\begin{prop}[Stabilizer gap]
    For any $H\in\mathrm{Herm}(\mathcal H)$
\begin{enumerate}
    \item $\delta_{\mathrm{STAB}}(H) \ge 0$;
    \item $\delta_{\mathrm{STAB}}(H)=0$ if and only if the ground eigenspace of $H$ contains a stabilizer state;
    \item if $H$ is a stabilizer Hamiltonian [Eq.~\eqref{Eq:stabilizer-Hamiltonian}], then $\delta_{\mathrm{STAB}}(H)=0$.
\end{enumerate}
\end{prop}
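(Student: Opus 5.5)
The three items follow from elementary variational facts about $\min\tr(H\rho)$, combined with the convexity of $\mathrm{STAB}_n$ already noted after Eq.~\eqref{Eq:stabilizer-ground-energy}.

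For item~1, I will use that $\mathrm{STAB}_n\subseteq\mathcal S(\mathcal H)$. A minimum over a subset cannot be smaller than the minimum over the whole set, so $E_{\mathrm{STAB}}(H)\ge E_{\mathrm{gs}}(H)$. Hence $\delta_{\mathrm{STAB}}(H)\ge 0$.

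For item~2, I first recall the standard fact that $E_{\mathrm{gs}}(H)=\lambda_{\min}(H)$. Writing $\Pi_0$ for the projector onto the ground eigenspace, a density operator $\rho$ attains $\tr(H\rho)=\lambda_{\min}$ if and only if $\rho$ is supported in $\operatorname{ran}\Pi_0$. This follows from $\tr[(H-\lambda_{\min}\iden)\rho]=0$ together with $H-\lambda_{\min}\iden\ge 0$, which forces $(H-\lambda_{\min}\iden)\rho=0$.

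The "if" direction is immediate: if $\ket{\mathbf S}$ lies in the ground eigenspace, then $E_{\mathrm{STAB}}\le\bra{\mathbf S}H\ket{\mathbf S}=\lambda_{\min}$, and item~1 gives equality.

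For the "only if" direction, suppose $\delta_{\mathrm{STAB}}=0$. Because $\mathrm{STAB}_n$ is a polytope and $\rho\mapsto\tr(H\rho)$ is linear, the minimum $E_{\mathrm{STAB}}$ is attained at a vertex, namely a pure stabilizer state $\ket{\mathbf S}$. This state satisfies $\bra{\mathbf S}H\ket{\mathbf S}=\lambda_{\min}$, so by the support fact it lies in the ground eigenspace.

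The only point needing care is the interpretation of "contains a stabilizer state": one could worry about mixed stabilizer states. The vertex argument settles this, since it produces a pure stabilizer state in the ground space. Conversely, any mixed stabilizer state supported in the ground space is a convex combination of pure stabilizer states, and each of those must then also lie in the ground space.

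For item~3, let $H=-\sum_{g\in G}g$ with $G=\mathrm{gen}(\mathbf S)$ independent, commuting, Hermitian, and satisfying $g^2=\iden$. Each term has eigenvalues $\pm1$, so $H\ge -|G|\,\iden$ in operator order. Every $\ket\psi\in V_{\mathbf S}$ attains this bound, so $\lambda_{\min}=-|G|$ and $V_{\mathbf S}$ is exactly the ground space.

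Next I extend $G$ to a maximal independent commuting set $G'=G\cup\{h_{1},\dots,h_{n-|G|}\}$ that generates a stabilizer group $\mathbf S'$ not containing $-\iden$. Concretely, I can take $\mathbf S'$ to be the stabilizer group of any state in the code $V_{\mathbf S}$ that is also a common eigenvector of logical Pauli operators; one can always add independent commuting Paulis, since the centralizer of $\mathbf S$ modulo $\mathbf S$ contains a logical $Z$ set. The resulting $\ket{\mathbf S'}$ is a pure stabilizer state in $V_{\mathbf S}$, and item~2 then yields $\delta_{\mathrm{STAB}}(H)=0$.

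This extension step, showing that a nonempty stabilizer code contains a stabilizer state, is the main obstacle. I would justify it via the standard symplectic fact that any isotropic subspace of $\mathbb F_2^{2n}$ extends to a Lagrangian one, with signs chosen so that $-\iden\notin\mathbf S'$.
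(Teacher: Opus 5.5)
Your proof is correct and follows the paper's argument. Item 1 comes from $\mathrm{STAB}_n\subseteq\mathcal S(\mathcal H)$; item 2 comes from attainment of $E_{\mathrm{STAB}}$ together with the fact that any state of energy $E_{\mathrm{gs}}$ lies in the ground space, where your vertex argument replaces the paper's appeal to compactness and gives a pure stabilizer state directly; item 3 comes from the ground space being the code $V_{\mathbf S}$, and your operator-inequality and isotropic-to-Lagrangian extension steps supply details the paper asserts without proof, namely that this ground space is $V_{\mathbf S}$ and that a nonempty stabilizer code contains a pure stabilizer state.
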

\begin{proof}
(1) holds because $\mathrm{STAB}_n\subseteq\mathcal S(\mathcal H)$, so the minimum over all states cannot exceed the minimum over stabilizer states. For (2), if $\delta_{\mathrm{STAB}}(H)=0$ then $E_{\mathrm{STAB}}(H)=E_{\mathrm{gs}}(H)$, and by compactness some stabilizer state achieves this value, hence it lies in the ground eigenspace. Conversely, if a stabilizer state belongs to the ground eigenspace, it achieves $E_{\mathrm{gs}}(H)$, so $E_{\mathrm{STAB}}(H)=E_{\mathrm{gs}}(H)$. (3) follows because the ground space of a stabilizer Hamiltonian is a stabilizer code that contains stabilizer states.
\end{proof}

Operationally, direct energy measurement is appealing because it relies only on estimating the expectation value of a physically motivated observable $H_{\ms S}$ (or, equivalently, its local Pauli terms). The main virtue of the direct energy witness is thus its simplicity: certifying nonstabilizerness reduces to estimating a single observable. However, this simplicity also limits its scope. Since $W_H$ is a linear witness aligned with the Hamiltonian direction, it can detect only nonstabilizer states whose average energy lies below the stabilizer threshold $E_{\mathrm{STAB}}(H_{\ms S})$. In particular, it is blind to nonstabilizer states whose average energy is the same as that of some stabilizer state, and it becomes trivial whenever the Hamiltonian admits a stabilizer state in its ground space, as happens for commuting Pauli Hamiltonians.

\subsection{Heat measurement as a nonlinear witness}
\label{sec:indirect-heat-measurement}

The direct energy witness is limited by the fact that it depends only on the average energy. As a result, it cannot distinguish nonstabilizer states whose energy expectation value coincides with that of some stabilizer state, and it becomes trivial whenever the stabilizer gap vanishes. To overcome this limitation, we consider the indirect model [see Fig.~\hyperref[F-measurement-models]{\ref{F-measurement-models}(b)}], in which an unknown quantum state interacts with a thermal environment and the resulting heat exchange is measured. As argued in Section~\ref{S:thermodynamic-framework}, quantum features, including nonstabilizerness, leave fingerprints on thermodynamic processes. This can be exploited to construct a witness of nonstabilizerness based solely on energetic data.

To formalize the stabilizer hypothesis conditioned on the measured energy, define the energy slice
\begin{equation}
    \mathrm{STAB}_n(E_0):=\{\sigma \in \mathrm{STAB}_n : E(\sigma) = E_0\}.
\end{equation}
If $\mathrm{STAB}_n(E_0)=\emptyset$, then the energy estimate alone already rules out stabilizer states. Otherwise, we define the stabilizer heat bounds
\begin{align}\label{Eq:stab-heat-window-def}
    Q_c^{\mathrm{STAB}}(E_0) &:= \min_{\sigma \in \mathrm{STAB}_n(E_0)} Q_c(\sigma), \\
    Q_h^{\mathrm{STAB}}(E_0) &:= \max_{\sigma \in \mathrm{STAB}_n(E_0)} Q_h(\sigma).
\end{align}
where $Q_{c/h}(\sigma)$ are the optimal cooling/heating values defined in Eq.~\eqref{Eq:app-optimal-heat}. By construction, any stabilizer state $\sigma\in\mathrm{STAB}_n(E_0)$ and any thermal process must satisfy
\begin{equation}\label{Eq:heat-window-stabilizer-bound}
    Q_c^{\mathrm{STAB}}(E_0) \leq Q \leq Q_h^{\mathrm{STAB}}(E_0).
\end{equation}
Therefore, an experimental observation of $(E_0,Q)$ with $Q<Q_c^{\mathrm{STAB}}(E_0)$ or $Q>Q_h^{\mathrm{STAB}}(E_0)$ certifies that $\rho_{\ms S}\notin\mathrm{STAB}_n(E_0)$ and hence that $\rho_{\ms S}$ is nonstabilizer.

To compute the window boundaries, we note that the optimal heats $Q_{c/h}(\sigma)$ for an individual state $\sigma$ are determined by its nonequilibrium free energy $F_\beta(\sigma)$ by Eq.~\eqref{Eq:QcQh-solution}. For the set $\mathrm{STAB}_n(E_0)$, the extremal values $Q_{c/h}^{\mathrm{STAB}}(E_0)$ are given by the set-witness theorem of Ref.~\cite{de2025heat}, which expresses these bounds in terms of the maximal free energy over the set. Since all states in $\mathrm{STAB}_n(E_0)$ share the same energy $E_0$, maximizing free energy 
\begin{equation}  \label{Eq:stab-max-free-energy}
F_\beta^\star(\mathrm{STAB}_n|E_0):=\max_{\sigma \in \mathrm{STAB}_n(E_0)} F_\beta(\sigma) \!=\! E_0 - \beta^{-1} S_{\min}^{\mathrm{STAB}_n}(E_0),
\end{equation}
is achieved by minimizing the entropy,
\begin{equation}
    S_{\min}^{\mathrm{STAB}_n}(E_0) := \min_{\sigma \in \mathrm{STAB}_n(E_0)} S(\sigma).
\end{equation}
Let $\beta_c^\star\le \beta_h^\star$ be the smallest and largest solutions (when they exist) of the root equation $F_\beta[\gamma_{\ms S}(x)] = F_\beta^\star(\mathrm{STAB}_n|E_0)$ with boundary conventions as in Section~\ref{S:thermodynamic-framework} when a branch has no finite solution. Then the stabilizer heat window boundaries follow from Eq.~\eqref{Eq:QcQh-solution}:
\begin{align}
    Q_{c/h}^{\mathrm{STAB}}(E_0) = E_0 - E[\gamma_{\ms S}(\beta_{c/h}^\star)],
\end{align}
where $E(\rho):=\tr(H_{\ms S}\rho)$. This establishes heat exchange as a nonlinear witness of nonstabilizerness, based solely on energetic data $(E_0,Q)$.

In practice, both $E_0$ and $Q$ are estimated with finite statistical error. Denoting the estimates by $\hat E_0$ and $\hat Q$ with the respective uncertainties $\delta E$ and $\delta Q$, the witnessing condition becomes robust if $\hat Q \pm \delta Q$ lies entirely outside the interval $[Q_c^{\mathrm{STAB}}(\hat E_0 \pm \delta E), Q_h^{\mathrm{STAB}}(\hat E_0 \pm \delta E)]$. Alternatively, a confidence interval for the witness violation may be computed using standard statistical methods. Note that our witnessing protocol is sound: any violation (with appropriate statistical confidence) certifies nonstabilizerness. However, in general, it will provide only a necessary condition for stabilizerness, as the heat bounds might be respected even if the state lies outside the stabilizer polytope.

\section{Stabilizer energy witness \label{S:stabilizer-energy-witness}}
In this section, we analyze the direct energy measurement scenario illustrated in Fig.~\hyperlink{F-measurement-models}{\ref{F-measurement-models}(a)}. For notational simplicity, we suppress the system label and denote the system by $(\mathcal H,H)$, where $\mathcal H=(\mathbb C^{2})^{\otimes n}$ and $H\in\mathrm{Herm}(\mathcal H)$ as Sec. \ref{S:stabilizer-resource-theory}. The central question is: \emph{What are the Hamiltonians $H$ whose energy estimate $\tr(H\rho)$ can witness the nonstabilizerness of an unknown state $\rho$?}

\subsection{Stabilizer gap and witnessing power}

As described in Sec.~\hyperref[SS:direct-measurement]{IV-1}, Hamiltonians with $\delta_{\mathrm{STAB}}(H)>0$ are linear witnesses for magic. A distinguished family of such witnesses arises from the set of facets $\mathrm F_n$ defining the $H$-representation of $\mathrm{STAB}_n$ in Eq. (\ref{eq:Hrepstab}). If $A \in \mathrm F_n$ is non-trivial, meaning that it is not positive semidefinite, we guarantee that for every quantum state $\rho$:
\begin{equation}
    \tr(A\rho) < 0 \Rightarrow \delta_\mathrm{STAB}(A) >0\;,
\end{equation}
since all stabilizer states attain nonnegative values in expectation. For small systems where the facet set $\mathrm{F}_n$ is explicitly known ($n=1,2$ \cite{zurel2020hidden,Palhares2026}), these operators provide optimal linear witnesses. 

\begin{mybox}{Optimal single-qubit witness}
For a normalized Hamiltonian $H=\mathbf h\cdot\v\sigma$ with $\|\mathbf h\|_2=1$, the stabilizer gap is
\begin{equation}\label{Eq:1q-gap}
    \delta_{\mathrm{STAB}}(H)=1-\max_{i\in\{x,y,z\}}|h_i|.
\end{equation}
The maximum gap is $1-\tfrac{1}{\sqrt3}$, attained for $H=-\tfrac{1}{\sqrt3}(\v f\cdot\v\sigma)$ with $\v f\in\{\pm1\}^3$. These Hamiltonians are affinely related to the facet operators $1-\v f\cdot\v\sigma$, since $1-\v f \cdot \v\sigma = \iden +\sqrt{3}H$. Thus they define the same witnessing direction, up to an additive identity shift and a positive rescaling. These optimal Hamiltonians correspond to the Clifford orbit of the $T$-type magic state 
\begin{equation}
    \ketbra{T}=\tfrac12\qty[\iden+\tfrac{1}{\sqrt3}(X+Y+Z)].
\end{equation}
\end{mybox}

For non‑interacting Hamiltonians $H=\sum_{i=1}^n \v{h}_i\cdot\v{\sigma}_i$, the stabilizer optimization factorizes and the gap is additive:
\begin{equation}
    \delta_{\mathrm{STAB}}(H)=\sum_{i=1}^n\Bigl(\|\v{h}_i\|_2-\max\{|h_{i,x}|,|h_{i,y}|,|h_{i,z}|\}\Bigr).
\end{equation}
Under normalization $\|\v{h}_i\|_2=1$ for each $i$, the maximum gap per ‐ site is $1-\tfrac{1}{\sqrt{3}}$ and the maximum total gap is $n\qty(1-\tfrac{1}{\sqrt{3}})$. For interacting ($k\ge2$) Hamiltonians, vanishing gaps can occur even if they are non‑stabilizer. It is illustrative to consider the following two-qubit example:

\begin{mybox}{Fine-tuned two-qubit witness}
Consider the one‑parameter family
\begin{equation}
    H_{\Phi^+}(\varepsilon)=-X\otimes X - Z\otimes Z + \varepsilon\,(Z\otimes\iden-\iden\otimes Z),
    \label{Eq:bell-Hamiltonian}
\end{equation}
with $\varepsilon\ge0$. At $\varepsilon=0$, $H_{\Phi^+}(0)$ is a stabilizer Hamiltonian with unique ground state $\ket{\Phi^+}=2^{-\nicefrac{1}{2}}(\ket{00}+\ket{11})$. For $\varepsilon>0$ the perturbation anticommutes with $X\otimes X$, so the Hamiltonian is no longer of the stabilizer form. A direct diagonalization gives
\begin{align}
    E_{\mathrm{gs}}\bigl[H_{\Phi^+}(\varepsilon)\bigr]&=\min\bigl\{-2,\;1-\sqrt{1+4\varepsilon^2}\bigr\},\\
    E_{\mathrm{STAB}}\bigl[H_{\Phi^+}(\varepsilon)\bigr]&=\min\bigl\{-2,\;1-2\varepsilon\bigr\}.
\end{align}
Because $\bra{\Phi^+}H_{\Phi^+}(\varepsilon)\ket{\Phi^+}=-2$, the Bell state remains a ground state for $0\le\varepsilon\le\sqrt2$ (with a level crossing at $\varepsilon=\sqrt2$). Hence $\delta_{\mathrm{STAB}}[H_{\Phi^+}(\varepsilon)]=0$ in this interval, even though $H_{\Phi^+}(\varepsilon)$ is not a stabilizer Hamiltonian. This demonstrates that vanishing stabilizer gaps can occur for non‑stabilizer Hamiltonians through fine‑tuning. For $\varepsilon > \sqrt 2$, there is already a stabilizer gap: To illustrate what kind of states have their magic witnessed in this regime, consider the following family of states:
\begin{equation}
    \ket{\psi(\theta, \phi)} = \iden \otimes e^{-i\phi \tfrac{X}{2}}[\cos\theta \ket{00} + \sin \theta \ket{11}]\;.
    \label{Eq:2qbitexample}
\end{equation}
Since every two qubit pure state can be rotated to $\cos\theta \ket{00} + \sin \theta\ket{11}$ with $\theta \in [0, \tfrac\pi2]$ by local unitaries, it is natural to consider the two-parameter family of states labeled by $(\theta, \phi)$, defined on a half-sphere. In Fig.~\ref{fig:energy_2qbitexample}, we plot the ground and stabilizer energies, and highlight the yellow region where nonstabilizerness is witnessed. In particular, for $\varepsilon=\tfrac32$, we also highlight the states that have energies in the corresponding window.
\end{mybox}

\begin{figure}
    \centering
    \includegraphics{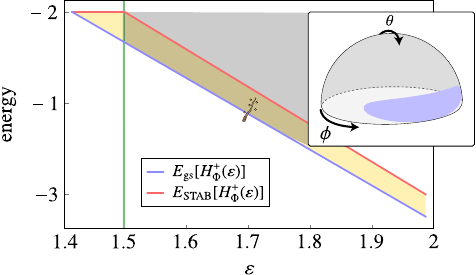}
    \caption{\emph{Stabilizer-gap for the perturbed Bell Hamiltonian}. Comparison of the true ground-state energy $E_{\mathrm{gs}}[H_{\Phi^+}(\varepsilon)]$ and the stabilizer ground-state energy $E_{\mathrm{STAB}}[H_{\Phi^+}(\varepsilon)]$ for the two-qubit Hamiltonian $H_{\Phi^+}(\varepsilon)$ in Eq.~\eqref{Eq:bell-Hamiltonian}, shown for $\varepsilon \in (\sqrt{2},2]$. The yellow region between the two curves marks the energy window in which nonstabilizerness is certified. At $\varepsilon = \tfrac32$ (green vertical line), the blue region on the $(\theta,\phi)$ half-sphere indicates the family of states in Eq.~\eqref{Eq:2qbitexample} whose energies fall below $E_{\mathrm{STAB}}$, and are therefore witnessed to be nonstabilizer.}
    \label{fig:energy_2qbitexample}
\end{figure}

However, in general, perturbing a stabilizer Hamiltonian creates a positive gap. The following lemma gives a sufficient condition.

\begin{lem}[Perturbation creates a stabilizer gap] \label{lem:perturb-stabilizer}
Let $H_0$ be a stabilizer Hamiltonian with a unique ground state $\ket{\mathbf S}$ and a spectral gap $\Delta>0$. Consider the perturbed Hamiltonian $H(\lambda)=H_0+\lambda V$ with $V^\dagger=V$. Assume that $V$ couples the ground state with excited states, i.e., $(\iden-\ketbra{\mathbf S})V\ket{\mathbf S}\neq0$. Then there exists $\lambda_0>0$ such that for all $0<|\lambda|<\lambda_0$,
\begin{enumerate}
    \item $H(\lambda)$ has a unique ground state $\ket{\psi(\lambda)}$ that is not a stabilizer state;
    \item $\delta_{\mathrm{STAB}}(H(\lambda))>0$.
\end{enumerate}
\end{lem}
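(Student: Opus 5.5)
The plan is to use analytic (Kato--Rellich) perturbation theory for the isolated ground level of $H_0$, and then reduce both claims to one fact: the perturbed ground state is close to $\ket{\mathbf S}$ but never equal to it. Let $E_0$ be the ground energy of $H_0$. Since $E_0$ is nondegenerate and separated from the rest of the spectrum by $\Delta>0$, there is $\lambda_1>0$ such that for $|\lambda|<\lambda_1$ the lowest eigenvalue $E(\lambda)$ of $H(\lambda)$ stays simple and isolated. For instance, $|\lambda|\,\|V\|<\Delta/2$ suffices by Weyl's inequality. The associated ground state $\ket{\psi(\lambda)}$ can be chosen analytic in $\lambda$, with $\ket{\psi(0)}=\ket{\mathbf S}$. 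This already gives uniqueness of the ground state in item 1.

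Next I will show that $\ket{\psi(\lambda)}$ is not a stabilizer state for small $\lambda\neq 0$. Write $\Pi:=\iden-\ketbra{\mathbf S}$ and $R:=\Pi(H_0-E_0)^{-1}\Pi$ for the reduced resolvent, so $\|R\|\le 1/\Delta$. First-order Rayleigh--Schrödinger theory gives
\begin{equation}
\Pi\ket{\psi(\lambda)}=-\lambda\, R\,V\ket{\mathbf S}+O(\lambda^2).
\end{equation}
Because $\Pi V\ket{\mathbf S}\neq 0$ by hypothesis and $R$ is invertible on the range of $\Pi$, the vector $RV\ket{\mathbf S}$ is nonzero. Hence there is $\lambda_2>0$ such that $\Pi\ket{\psi(\lambda)}\neq 0$ for $0<|\lambda|<\lambda_2$. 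In other words, $\ket{\psi(\lambda)}$ is not proportional to $\ket{\mathbf S}$.

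It remains to exclude every other stabilizer state. Here I use the standard fact that two distinct pure stabilizer states satisfy $|\langle \mathbf S|\mathbf S'\rangle|^2\le 1/2$: their overlap is either $0$ or $2^{-k/2}$ with $k\ge1$. By continuity, $|\langle\mathbf S|\psi(\lambda)\rangle|^2\to 1$ as $\lambda\to0$. Choose $\lambda_3$ so that this fidelity exceeds $\tfrac34$ for $|\lambda|<\lambda_3$. If $\ket{\psi(\lambda)}$ equalled some $\ket{\mathbf S'}\neq\ket{\mathbf S}$, the fidelity would be at most $\tfrac12$, a contradiction. Setting $\lambda_0=\min\{\lambda_1,\lambda_2,\lambda_3\}$ proves item 1.

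Item 2 then follows from item 2 of the Stabilizer-gap proposition. The ground eigenspace of $H(\lambda)$ is the one-dimensional span of $\ket{\psi(\lambda)}$, and we have just shown this is not a stabilizer state. So the ground space contains no stabilizer state, and therefore $\delta_{\mathrm{STAB}}(H(\lambda))>0$.

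The main obstacle is the middle step, where I must rule out that $\ket{\psi(\lambda)}$ coincides, up to phase, with $\ket{\mathbf S}$ or with another stabilizer state at small nonzero $\lambda$. The first-order expansion with a controlled remainder handles $\ket{\mathbf S}$. The discreteness of stabilizer overlaps handles the others; I would quote this overlap bound from the Aaronson--Gottesman line of work rather than reprove it.
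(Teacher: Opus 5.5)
Your proof is correct. It shares the paper's skeleton: analytic perturbation theory for the simple, isolated ground level, continuity at $\lambda=0$, discreteness of the stabilizer set, and the coupling hypothesis. It differs in two local steps, which are worth comparing.

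\textbf{Excluding $\ket{\mathbf S}$ itself.} You expand to first order, $\Pi\ket{\psi(\lambda)}=-\lambda RV\ket{\mathbf S}+O(\lambda^2)$, and then control the remainder. The paper uses an exact argument instead. Since $H_0\ket{\mathbf S}=E_0\ket{\mathbf S}$, if $\ket{\mathbf S}$ were an eigenvector of $H_0+\lambda V$ with $\lambda\neq0$, then $V\ket{\mathbf S}=\lambda^{-1}\bigl(E(\lambda)-E_0\bigr)\ket{\mathbf S}$, contradicting $\Pi V\ket{\mathbf S}\neq0$. This holds for every $\lambda\neq0$ and needs no remainder estimate, so your $\lambda_2$ is unnecessary.

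\textbf{Excluding the other stabilizer states.} You invoke the Aaronson--Gottesman bound $|\langle\mathbf S|\mathbf S'\rangle|^2\le\frac12$. The paper uses only finiteness of the set of pure stabilizer states: a sequence of stabilizer ground states has a constant subsequence, whose limit must be $\ket{\mathbf S}\bra{\mathbf S}$. Equivalently, finiteness alone gives $\max_{\mathbf S'\neq\mathbf S}|\langle\mathbf S|\mathbf S'\rangle|^2<1$, which is all your fidelity argument needs.

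\textbf{What your route buys.} The explicit constant makes the result quantitative. Combine it with the exact eigenvector argument above and the elementary estimate $\|\Pi\ket{\psi(\lambda)}\|\le|\lambda|\|V\|/(\Delta-|\lambda|\|V\|)$. This gives an explicit threshold, e.g.\ $\lambda_0=(\sqrt2-1)\Delta/\|V\|$, whereas the paper's subsequence argument yields only the existence of $\lambda_0$.
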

\begin{proof}
For sufficiently small $|\lambda|$ the ground state remains unique with the energies and the eigenprojector $P(\lambda) = \ket{\psi(\lambda)}\bra{\psi(\lambda)}$ depending analytically on $\lambda$~\cite{kato1966perturbation}. The proof follows by contradiction: If $P(\lambda)$ were a stabilizer state for a sequence $\lambda_m\to0$, then by finiteness of the set of pure stabilizer states there would exist a subsequence with $P(\lambda_{m_k}) = \ketbra{\phi} \in \mathrm{STAB}_n\;\forall k$. Continuity at $\lambda=0$ imposes:
\begin{equation}
\ketbra{\phi} = \lim_{k \to \infty} P(\lambda_{m_k}) = \ketbra{\mathbf S}\;.
\end{equation}
This implies that $\ket{S}$ must satisfy the eigenvalue equation for every element of the subsequence, $(H_0 + \lambda_{m_k}V)\ket{S} =E(\lambda_{m_k})\ket{S}$. Denoting its eigenvalue with $H_0$ as $E_0$, since we have $\lambda_{m_k} \neq 0$ by construction:
\begin{equation}
    V \ket{\mathbf S} = \left(\frac{E(\lambda_{m_k})-E_0}{\lambda_{m_k}}\right)\ket{\mathbf S}\propto \ket{\mathbf S}\;,
\end{equation}
contradicting the coupling assumption. Hence $P(\lambda)$ is nonstabilizer, and because the ground state is unique, no stabilizer state can achieve $E_{\mathrm{gs}}(H(\lambda))$, so $E_{\mathrm{STAB}}(H(\lambda))>E_{\mathrm{gs}}(H(\lambda))$.
\end{proof}
In fact, a positive stabilizer gap can be shown even if the stabilizer Hamiltonian $H_0$ is degenerate, as shown in the Appendix~\ref{app:degeneratestabilizergap}. In that context, we also need to assume that there is a state $\ket{\mathbf  S}$ in the ground state subspace such that $\langle V^2\rangle_{\mathbf S}\neq 0$, along with another technical condition. Then, one can show that for sufficiently small $\lambda$, there is a positive stabilizer gap and the state
\begin{equation}
\ket{\psi(\lambda)} =\frac{\ket{\mathbf S}-\lambda \Pi_{\mathbf S^\perp} V \ket{\mathbf S}}{\sqrt{1+\langle V^2\rangle}_\mathbf S}\;,
\end{equation}
where $\Pi_{\mathbf S^\perp}$ is the projector in the orthogonal complement of the ground space, has energy smaller than $E_\mathrm{STAB}(H)$, giving an energy witnessed by such gap.

We note that the family $H_{\Phi^+}(\varepsilon)$ bypasses this lemma because the perturbation $V=\varepsilon(Z\otimes\iden-\iden\otimes Z)$ satisfies $V\ket{\Phi^+}=0$, failing the coupling condition. However, if we instead consider the perturbation $V^\prime= \varepsilon_1 Z \otimes \iden + \varepsilon_2 \iden \otimes Z$ to $H_{\Phi^+}(0)$, we note that:
\begin{equation}
V^\prime \ket{\Phi^{+}} =(\varepsilon_1 + \varepsilon_2)\ket{\Phi^-}\;,
\end{equation}
with $\ket{\Phi^-} =2^{-\nicefrac{1}{2}}(\ket{00}-\ket{11})$. Since $\langle \Phi^-  \ket{\Phi^+}=0$, this new perturbation now satisfies the coupling condition for $\varepsilon_1\neq \varepsilon_2$, and indeed the Hamiltonian has a nonstabilizer  ground state. This perspective makes the fine-tuning in the two-qubit example of Eq. (\ref{Eq:bell-Hamiltonian}) transparent: If the perturbation is generic enough, it usually satisfies our sufficient conditions, guaranteeing a stabilizer gap.

For $k\geq 2$, vanishing stabilizer gaps can occur well beyond perturbations of the stabilizer-Hamiltonian class. A representative example is the $n$-qubit ferromagnetic Heisenberg chain (with periodic boundary conditions $n+1\equiv 1$),
\begin{equation}\label{Eq:heisenberg-chain}
    H_{\text{Heis}}:= -\sum_{j=1}^n (X_j X_{j+1}+Y_{j}Y_{j+1}+Z_j Z_{j+1}).
\end{equation}
The terms in Eq.~\eqref{Eq:heisenberg-chain} do not commute, so $H_{\text{Heis}}$ is not of stabilizer form~\eqref{Eq:stabilizer-Hamiltonian}. Nevertheless, $\ket{0}^{\otimes n}$ is a ground state. Indeed, for two qubits the operator $X\otimes X+Y\otimes Y+ Z\otimes Z$ has spectrum $\{+1, -3\}$, hence obeying the operator inequality
\begin{equation}
    X_j X_{j+1}+Y_{j}Y_{j+1}+Z_j Z_{j+1} \leq \iden,
\end{equation}
so each bond term satisfies $-(\v\sigma_j\cdot\v\sigma_{j+1})\ge -\iden$. Summing over $j$ yields the global bound $H_{\mathrm{Heis}}\ge -n\,\iden$, i.e. $E_{\mathrm{gs}}(H_{\mathrm{Heis}})\ge -n$. On the other hand, $\ket{0}^{\otimes n}$ saturates this bound because each neighboring pair $\ket{00}$ lies in the symmetric (triplet) subspace where $\v\sigma_j\cdot\v\sigma_{j+1}=+1$, giving $\bra{0^{\otimes n}}H_{\mathrm{Heis}}\ket{0^{\otimes n}}=-n$. Therefore $E_{\mathrm{gs}}(H_{\mathrm{Heis}})=-n$ and, since $\ket{0}^{\otimes n}\bra{0}^{\otimes n}\in\mathrm{STAB}_n$,
\begin{equation}
    E_{\mathrm{STAB}}(H_{\mathrm{Heis}})=E_{\mathrm{gs}}(H_{\mathrm{Heis}})
    \quad\Longrightarrow\quad
    \delta_{\mathrm{STAB}}(H_{\mathrm{Heis}})=0.
\end{equation}
This illustrates that $\delta_{\mathrm{STAB}}(H)=0$ can occur for interacting, nonstabilizer  Hamiltonians whenever the ground space contains a stabilizer state.

\vspace{1cm}
\begin{mybox}{Interacting Hamiltonian with vanishing stabilizer gap}
\, The ferromagnetic Heisenberg chain~\eqref{Eq:heisenberg-chain} is not a stabilizer Hamiltonian, yet it has a stabilizer ground state $\ket{0}^{\otimes n}$. Consequently, $\delta_{\mathrm{STAB}}(H_{\mathrm{Heis}})=0$, so $H_{\mathrm{Heis}}$ cannot witness nonstabilizerness via ground-state energy measurements. This shows that vanishing stabilizer gaps are not exclusive to stabilizer Hamiltonians.
\end{mybox}

This Hamiltonian bypass previous results since it cannot be immediately written as a stabilizer Hamiltonian with a small perturbation added. It also illustrates that determining the stabilizer gap for interacting many-qubit Hamiltonians might be non-trivial, and necessary to be explicitly computed to determine if a given Hamiltonian can witness nonstabilizerness.

\subsection{Algorithms for the stabilizer energy}

The relevant question is now algorithmic: \emph{Given a $k-$local Hamiltonian $H$ as promised by the setup in Sec. \ref{S:measurement-witnessing}, how can $E_\mathrm{STAB}(H)$ and the corresponding stabilizer gap be obtained?} and can they be obtained \emph{efficiently}, meaning that there is an algorithm with runtime polynomial in the number of qubits as $n\to \infty$? In complexity theory, computing $E_{\mathrm{gs}}(H)$ for a general $k$‑local Hamiltonian is QMA‑hard~\cite{kitaev2002classical}, being a hard task even for quantum computers, and thus the hope of efficiently obtaining the exact stabilizer gap in Eq. (\ref{Eq:stabilizer-gap}) is limited, with the most generic strategy being (sparse-) diagonalizing the $2^n\times 2^n$ Hamiltonian matrix $H$. In order to use $\tr(H\rho)$ as a witness, it suffices to know: (1) the stabilizer ground state $E_\mathrm{STAB}(H)$, and (2) some quantum state satisfying Eq. (\ref{eq:witness-stab}), we have the guarantee of a positive stabilizer gap, which can be done without diagonalizing the Hamiltonian, as, for example, applying Lemma \ref{lem:perturb-stabilizer} (or its degenerate version in App. \ref{app:degeneratestabilizergap}, Lemma \ref{lem:degperturbstabilizer}) if the conditions apply, or by numerically find a variational state with energy lower than the stabilizer value.

However, a naive evaluation of $E_\mathrm{STAB}(H)$ is hard: Notice that we can write:
\begin{equation}
E_\mathrm{STAB}(H) =\min_{\rho \in \mathrm{STAB}_n}\tr(H\rho)  =\min_{\mathbf S \in \mathrm{ext}(\mathrm{STAB}_n)}\langle \mathbf S| H |\mathbf S\rangle\;,
\label{eq:Estab-hard}
\end{equation}
where we have used the fact that optimization over the stabilizer polytope can be restricted to its vertices due to convexity. Since it is known that $|\mathrm{ext}(\mathrm{STAB}_n)|=2^{\Theta(n^2)}$ \cite{aaronson2004improved}, the naive strategy would be to compute the energy of each vertex of the stabilizer polytope and take the minimal one, at least taking superexponential time. In \cite{sun2025stabilizerground}, the structure of $k$-local Hamiltonians was exploited to restrict this optimization. It was shown that Eq. (\ref{eq:Estab-hard}) actually restricts and can be written as:
\begin{equation}
E_\mathrm{STAB}(H) =-\max_{\mathbf Q \in C_\mathrm{max}[\mathbf P(H)]}\sum_{P \in \mathbf Q}w_P\;,
\label{eq:Estab-restricted}
\end{equation}
where $C_\mathrm{max}[\mathbf P(H)]$ is the set of maximal commuting subsets of $\pm \mathbf P(H)$ that can form stabilizer groups.
In App.~\ref{app:stabgroundenergy} we review the formal statement and the corresponding proof. By applying this restriction in optimization to commuting subsets, one can find the stabilizer ground state energy in $O(2^{c n \log n})$ time, with $c=O(1)$. In general, it will push the complexity of finding the minimal stabilizer energy into this combinatorial problem of finding independent sets of Paulis that have high weights, which depends on how complex the algebraic structure of $C_\mathrm{max}[\mathbf P(H)]$ is. 

There is an elegant way to illustrate this concept. Let $G(H)$ be the anti-commutation/frustration graph of $\mathbf P(H)$~\cite{gokhale2019minimizing,chapman2020characterization,mann2025graph,xu2025simultaneous}, which means that it is the graph whose vertices are $\mathbf P(H)$ and the edge set defined as:
\begin{equation}
    E[G(H)] =\{(P_1, P_2) \in \mathbf P(H)^{\times 2}|\;P_1 P_2=-P_2P_1\}\;.
\end{equation}
Then, the elements of $C_\mathrm{max}[\mathbf P(H)]$
are interpreted as maximal independent sets on the anticommutation graph. See Fig.~\ref{fig:frustrationgraph-example} for the examples of the Heisenberg and the transverse-field Ising chain, defined as:
\begin{equation}
    H_\mathrm{Ising} =-\sum_{j=1}^n (Z_j Z_{j+1}+ h X_j)\;,
    \label{eq:Ising-Hamiltonian}
\end{equation}
also taken with periodic boundary conditions. In general, $C_\mathrm{max}[\mathbf P(H)] \subseteq I_\mathrm{max}[G(H)]$, where $I_\mathrm{max}[G(H)]$ is the set of (maximal) independent subsets of $G(H)$, that is, subset of vertices that do not contain any edges and cannot increase. However, not every independent subset lifts to an element of $C_\mathrm{max}[\mathbf P(H)]$, due to algebraic dependence of the Paulis: For example, in the Heisenberg chain, there is a $S \in I_\mathrm{max}[G(H)]$ such that $\{Z_1Z_2, Y_1Y_2, X_1X_2\} \subseteq S$, since they form a commuting tuple. However, they cannot generate a stabilizer group, as their product is $-1$.
\begin{figure}[t]
    \centering
    \includegraphics{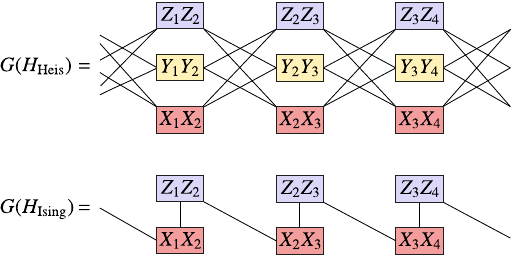}
    \caption{\emph{Anti-commutation (frustration) graphs associated with two local Hamiltonians}. Vertices denote Pauli terms and edges represent anti-commuting pairs. The Heisenberg Hamiltonian (top) [Eq.~\eqref{Eq:heisenberg-chain}] gives a highly connected graph due to the non-commuting $X$, $Y$, and $Z$ interactions on each edge, while the transverse-field Ising model (bottom) [Eq.~\eqref{eq:Ising-Hamiltonian}] produces a bipartite graph separating $Z Z$ and $X$ terms. This distinction directly impacts the structure of maximal commuting subsets and the complexity of evaluating $E_\mathrm{STAB}(H)$.}
    \label{fig:frustrationgraph-example}
\end{figure}
This extra algebraic dependency on Pauli operators is one of the sources of hardness in the computation of the commuting subsets. However, we can introduce a class of Hamiltonians where the problem of commuting sets reduces to finding independent sets on the frustration graph:  
Let $\{\mathbf S_j\}_{j=1}^\ell$ be a list of stabilizer groups, with corresponding stabilizer Hamiltonians $\{H_{\mathbf S_j}\}_j$ given as Eq.~\eqref{Eq:stabilizer-Hamiltonian}. We define
\begin{equation}
    H \equiv \sum_{j=1}^{\ell}w_j H_{\mathbf S_j}\;.
    \label{Eq:stabilizerdecompHamiltonian}
\end{equation}
We refer to this class of Hamiltonians, now parametrized by $\{w_j\}_{j=1}^\ell$, by sum-of-stabilizers. For this class, estimation of the stabilizer ground state energy does turn out to be a purely graph-theoretical problem:
\begin{lem}[Stabilizer Energy via MWIS]
    Let $\{\mathbf S_j\}_{j=1}^\ell$ be a set of independent stabilizer groups, which means $\mathbf S_j \cap \langle \cup_{k \in S} \mathbf S_k \rangle = \{\iden\}$   for all $j \in [\ell], S \subseteq [\ell]$, and $H$ be a corresponding sum-of-stabilizers Hamiltonian. Then:
    \begin{equation}
        E_\mathrm{STAB}(H) = -\max_{\mathbf Q \in I_{\max}[G(H)]} \sum_{P \in \mathbf Q}w_P\;.
        \label{Eq:energysumofstabilizers}
    \end{equation}
    This is referred to as computing the maximal weight independent set of the graph $G(H)$ with weights $\{w_P\}_{P \in \mathbf P(H)}$. \label{lem:sumofstabilizers}
\end{lem}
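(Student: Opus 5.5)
The plan is to start from the reduction in Eq.~\eqref{eq:Estab-restricted}, which is reviewed in App.~\ref{app:stabgroundenergy} and which we may assume. It writes $E_\mathrm{STAB}(H)$ as $-\max$ of $\sum_{P\in\mathbf Q}w_P$ over $\mathbf Q\in C_\mathrm{max}[\mathbf P(H)]$. We already know $C_\mathrm{max}[\mathbf P(H)]\subseteq I_\mathrm{max}[G(H)]$. So it suffices to prove the reverse direction: under the independence hypothesis, every independent set of the frustration graph lifts to a commuting subset that generates a genuine stabilizer group. Then the two maximizations run over the same family, and Eq.~\eqref{Eq:energysumofstabilizers} follows.

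\emph{Step 1 (setting up $\mathbf P(H)$).} Expanding Eq.~\eqref{Eq:stabilizerdecompHamiltonian} gives $H=-\sum_j w_j\sum_{g\in\mathrm{gen}(\mathbf S_j)}g$, so $\mathbf P(H)=\bigcup_j \mathrm{gen}(\mathbf S_j)$ with $w_g=w_j$. Independence ensures that no Pauli appears, up to sign, in two different generating sets. Hence there is no merging of terms and no cancellation, and no element $-g$ appears alongside $g$. I will assume $w_j>0$, following the convention of Sec.~\ref{S:measurement-witnessing}. With this convention, choosing the sign $+P$ is always at least as good as $-P$ among the subsets of $\pm\mathbf P(H)$. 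This lets us work with subsets of $\mathbf P(H)$ itself.

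\emph{Step 2 (independent sets generate stabilizer groups).} Let $\mathbf Q\subseteq\mathbf P(H)$ be a set of pairwise commuting terms, i.e.\ an independent set of $G(H)$. We must show that $\langle\mathbf Q\rangle$ does not contain $-\iden$, and that $\mathbf Q$ is an independent generating set.

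Suppose some nonempty product of elements of $\mathbf Q$ equals $\pm\iden$. Since the elements commute, we can regroup the product, with no reordering signs, as $\prod_j s_j$, where $s_j\in\mathbf S_j$ is the product of the chosen generators from $\mathrm{gen}(\mathbf S_j)$.
\begin{itemize}
\item Because $\mathrm{gen}(\mathbf S_j)$ is an independent generating set, $s_j=\iden$ only if no generator of $\mathbf S_j$ was chosen. So some $s_{j_0}\neq\iden$.
\item Then $s_{j_0}=\pm\prod_{k\neq j_0}s_k^{-1}$, which places $s_{j_0}$, up to a sign, in $\langle\bigcup_{k\neq j_0}\mathbf S_k\rangle$. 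This contradicts $\mathbf S_{j_0}\cap\langle\bigcup_{k\neq j_0}\mathbf S_k\rangle=\{\iden\}$.
\end{itemize}
This rules out both $-\iden\in\langle\mathbf Q\rangle$ and any dependence among the elements of $\mathbf Q$.

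\emph{Step 3 (maximality).} A maximal independent set $\mathbf Q$ then generates a stabilizer group. The set of terms of $\mathbf P(H)$ that commute with all of $\mathbf Q$ is exactly $\mathbf Q$. Hence $\mathbf Q$ is a maximal commuting subset forming a stabilizer group, i.e.\ $\mathbf Q\in C_\mathrm{max}$. The stabilizer group it generates can be completed to rank $n$ by adding Paulis outside $\mathbf P(H)$. Any pure stabilizer state of the completed group has $\langle P\rangle=1$ for $P\in\mathbf Q$ and $\langle P\rangle=0$ for the remaining terms of $\mathbf P(H)$, since each of those anticommutes with some element of $\mathbf Q$. Its energy is therefore exactly $-\sum_{P\in\mathbf Q}w_P$, which gives achievability directly.

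\emph{Main obstacle.} The delicate point is the sign/phase issue in Step 2. The relation $s_{j_0}=-\prod_{k\neq j_0} s_k^{-1}$ only places $s_{j_0}$ in the group generated by the other $\mathbf S_k$ \emph{up to a sign}. I will therefore read the independence hypothesis projectively, i.e.\ in the symplectic $\mathbb F_2^{2n}$ picture, modulo phases. This is the natural interpretation, and it is exactly what excludes algebraic relations such as $X_1X_2\cdot Y_1Y_2\cdot Z_1Z_2=-\iden$ in the Heisenberg chain. If the literal group-theoretic reading were insisted upon, I would first check whether it already forces the projective statement. If it does not, I would state the lemma with the hypothesis taken modulo phases.
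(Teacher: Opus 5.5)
Your proposal is correct and follows essentially the paper's route: identify the maximal independent sets of $G(H)$ with $C_{\max}[\mathbf P(H)]$ under the independence hypothesis, then invoke the reduction to closed commuting subsets. Your Step~2 is in fact sounder than the appendix, which justifies $-\iden\notin\langle\mathbf Q\rangle$ by asserting $\langle\mathbf Q\rangle\subseteq\mathbf S_j$, and that fails once $\mathbf Q$ meets several generating sets. The check you left open comes out negative, so the phase-free reading of the hypothesis you adopt (with $S$ ranging over subsets not containing $j$) is genuinely required. Take $\mathbf S_1=\langle X_1X_2\rangle$, $\mathbf S_2=\langle Y_1Y_2\rangle$, $\mathbf S_3=\langle Z_1Z_2\rangle$. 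These satisfy the literal condition; e.g.\ $\langle\mathbf S_1\cup\mathbf S_2\rangle=\{\iden,X_1X_2,Y_1Y_2,-Z_1Z_2\}$ meets $\mathbf S_3$ trivially. Yet $X_1X_2\,Y_1Y_2\,Z_1Z_2=-\iden$, and for $H=-(X_1X_2+Y_1Y_2+Z_1Z_2)$ one has $E_{\mathrm{STAB}}(H)=-1$ while the MWIS value is $-3$.
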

The proof is presented in Appendix~\ref{app:proofsumofstabilizers}. It is known that the MWIS problem is NP-hard, although there are algorithms that can compute MWIS on very large graphs with millions of vertices~\cite{lamm2019exactly}.
\begin{figure}[t]
    \centering
    \includegraphics{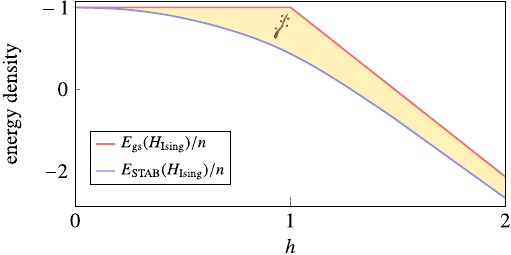}
    \caption{\emph{Quantum and stabilizer ground-state energy densities for the transverse-field Ising chain}. The exact energy density $E_\mathrm{gs}(H)/n$ (red) and its stabilizer counterpart $E_\mathrm{STAB}(H)/n$ (blue) are plotted as functions of the transverse field $h$ for $n=100$. The shaded region corresponds to energies that certify nonstabilizerness. The stabilizer gap vanishes in the limits $h \to 0$ and $h \to \infty$, and is maximal near the quantum critical point at $h=1$.}
    \label{fig:tfimstabilizergap}
\end{figure}
Fortunately, for various classes of hamiltonains, the estimation of $E_\mathrm{STAB}(H)$ is much easier than the worst-case upper bounds mentioned above. In \cite{sun2025stabilizerground}, it was shown that if one restricts to one-dimensional Hamiltonians, there is an algorithm with $O(n)$ runtime to compute its stabilizer ground energy, for example. Furthermore, if its anti-commutation graph is bipartite, it is known that MWIS can be solved in polynomial time \cite{hopcroft1973n}. Let $\mathbf P_X$ and $\mathbf P_Z$ be two Pauli sets fully composed of $X$ and $Z$ products, respectively. The corresponding family of Hamiltonians:
\begin{equation}
    H(\mathbf P_X, \mathbf P_Z) =-\sum_{P \in \mathbf P_Z}P -w \sum_{P^\prime \in \mathbf P_X}P^\prime\;,
    \label{Eq:bipartiteHamiltonian}
\end{equation}
is sum-of-stabilizers, with stabilizer groups $\mathbf S_X =\langle \mathbf P_X\rangle$, $\mathbf S_Z =\langle \mathbf P_Z\rangle$ independent, satisfying the assumptions of Lemma~\ref{lem:sumofstabilizers}. Hence, for this class of \emph{bipartite Hamiltonians}, the witness can be efficiently evaluated.

For example, consider the transverse-field Ising chain, as defined in Eq.~\eqref{eq:Ising-Hamiltonian}, with $h>0$. In this case, we have $\mathbf P_Z = \langle \{Z_j Z_{j+1}\}_{j=1}^n \rangle$ and $\mathbf P_X = \langle \{X_j\}_{j=1}^n \rangle$. Given $\mathbf Q \in C_\mathrm{max}[\mathbf P(H)]$, with stabilizer energies $-n$ and $-nh$, respectively. Notice that $h > 1$ or $h < 1$ selects one of two stabilizer groups, and in the two regimes, all the other maximal commuting subset has a higher energy. Therefore,
\begin{equation}
    E_\mathrm{STAB}(H_\mathrm{ising})  = -n \max(1,h)\;.
\end{equation}
Due to integrability, the exact analytical expression of $E_\mathrm{gs}(H_\mathrm{ising})$ is known for any $n$. In Fig.~\ref{fig:tfimstabilizergap}, we plot both the ground state energy and its stabilizer restriction for $n=100$. It has an interesting behavior: The stabilizer gap vanishes in two regimes: when $h=0$, when the ground space is the code of $\mathbf S_Z$, and when $h \to \infty$, reducing to the code of $\mathbf S_X$, and its maximum appears at $h=1$, where it is known to host a quantum phase transition~\cite{sachdev2011quantum}. This is somewhat expected: It is believed that at the quantum critical point, the state has \emph{long-range magic}~\cite{wei2025long,white2021conformal}, meaning that it can only be related to a stabilizer state by a deep quantum circuit; it is not expected that its energy can also be approximated by a stabilizer state. 

This example shows that the energy of a quantum state not only has enough information to witness nonstabilizerness, but the stabilizer gap behavior itself within a Hamiltonian family class can reveal interesting features about the ground state structure.

\section{Stabilizer heat witness \label{S:stabheatwit}}

We begin with the single-qubit case, where the heat-based witness admits a complete geometric characterization. For fixed energy, the witness detects exactly those states whose entropy is lower than that of every stabilizer state compatible with the same energy. This gives an if-and-only-if criterion for detectability and also a condition for when the witness is optimal along a given noisy family. The necessary and sufficient condition for detecting nonstabilizerness is captured by the following Theorem:

\begin{thm}[Single-qubit heat-detectability criterion]\label{Thm:heat-detectability}
Let $\rho=\frac12(\iden+\v r\cdot \v \sigma)$ be a qubit state with Hamiltonian $H=\v h\cdot \v \sigma$, where $\|\v h\|_2=1$, and let $E_{\v r}:=\tr(\rho H)=\v h\cdot \v r$. Assume $\mathrm{STAB}_1(E_{\v r})\neq\emptyset$. Then the heat-based nonstabilizer witness detects $\rho$ if and only if
\begin{equation}\label{Eq:detection-equation}
    \|\v r\|_2 >R_\star(E_{\v r})
\end{equation}
where $R_\star(E_{\v r}):=\max\Bigl\{\|\v s\|_2:\ \|\v s\|_1\le 1,\ \v h\cdot \v s=E_{\v r}\Bigr\}.$
\end{thm}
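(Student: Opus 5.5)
We reduce everything to monotonicity of the heat bounds in the entropy and then to a purely geometric statement about Bloch vectors. By the discussion following Eq.~\eqref{Eq:stab-max-free-energy}, the stabilizer heat window at energy $E_{\v r}$ is determined by the root equation $F_\beta[\gamma_{\ms S}(x)]=F_\beta^\star(\mathrm{STAB}_1|E_{\v r})$, where $F^\star_\beta=E_{\v r}-\beta^{-1}S^{\mathrm{STAB}_1}_{\min}(E_{\v r})$. Likewise, the optimal heats $Q_{c/h}(\rho)$ of the state itself are fixed by $F_\beta(\rho)=E_{\v r}-\beta^{-1}S(\rho)$ through Eq.~\eqref{Eq:QcQh-solution}. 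Since $\rho$ and every element of $\mathrm{STAB}_1(E_{\v r})$ share the same energy, the only distinguishing quantity is the entropy. The first step is therefore to show that detection, i.e.\ $Q_c(\rho)<Q_c^{\mathrm{STAB}}(E_{\v r})$ or $Q_h(\rho)>Q_h^{\mathrm{STAB}}(E_{\v r})$, holds if and only if $F_\beta(\rho)>F^\star_\beta$, equivalently $S(\rho)<S^{\mathrm{STAB}_1}_{\min}(E_{\v r})$.

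For this equivalence we use the convexity of $x\mapsto F_\beta[\gamma_{\ms S}(x)]$, whose minimum value $F_\beta[\gamma_{\ms S}(\beta)]$ is attained at $x=\beta$. Raising the target level $F$ pushes the smaller root $\beta_c$ strictly down and the larger root $\beta_h$ strictly up. Because $x\mapsto E[\gamma_{\ms S}(x)]$ is strictly decreasing for a nondegenerate qubit Hamiltonian (we have $\|\v h\|_2=1$, so the eigenvalues are $\pm1$), the bound $Q_c=E_{\v r}-E[\gamma(\beta_c)]$ strictly decreases and $Q_h$ strictly increases. When a branch has no finite root, we invoke the boundary convention; there $Q_h$ saturates at the energetic extreme, and we must check that in this regime the strict inequality still transfers through the cooling branch, or argue that both saturate simultaneously. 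Conversely, if $F_\beta(\rho)\le F^\star_\beta$, then $Q_{c/h}(\rho)$ lie inside the window, so no detection occurs.

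It remains to identify $S_{\min}$ geometrically. For a qubit, $S(\sigma)$ is a strictly decreasing function of the Bloch radius $\|\v s\|_2$. The set $\mathrm{STAB}_1(E_{\v r})$ is the slice of the octahedron $\{\|\v s\|_1\le1\}$ by the plane $\v h\cdot\v s=E_{\v r}$. Hence
\begin{equation}
S^{\mathrm{STAB}_1}_{\min}(E_{\v r})=h_2\!\left(\tfrac{1+R_\star(E_{\v r})}{2}\right),
\end{equation}
with the maximum defining $R_\star$ attained by compactness; the slice is nonempty by assumption. Thus $S(\rho)<S_{\min}$ is equivalent to $\|\v r\|_2>R_\star(E_{\v r})$, which is Eq.~\eqref{Eq:detection-equation}.

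The main obstacle is the first step. We need strict monotonicity of both roots in the free-energy level, and we need to handle the degenerate regimes: the heating branch without a solution, and the equilibrium point where the two roots merge. In these cases the equivalence ``detection $\iff F_\beta(\rho)>F^\star_\beta$'' must be verified separately. The approach we would try first is to appeal to the set-witness theorem of Ref.~\cite{de2025heat}, which expresses the set bounds via the maximal free energy. This reduces the claim to the observation that the map from free energy to the pair $(Q_c,Q_h)$ is strictly monotone on at least one branch whenever $F$ exceeds the equilibrium value.
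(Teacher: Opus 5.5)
Your proposal is correct and follows the paper's route: detection $\iff F_\beta(\rho)>F^\star_\beta(\mathrm{STAB}_1|E_{\v r}) \iff S(\rho)<S^{\mathrm{STAB}_1}_{\min}(E_{\v r}) \iff \|\v r\|_2>R_\star(E_{\v r})$. The paper simply asserts the first equivalence by construction from the set-witness bounds, whereas you sketch the root-monotonicity argument behind it. The degenerate heating regime you flag causes no trouble: the cooling root exists for every $F<1$ (allowing negative inverse temperature), and $Q_c$ is strictly decreasing in $F$, which gives the forward direction, while weak monotonicity of $Q_h$ suffices for the converse.
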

\begin{sproof}
By construction of the heat-based witness, detectability on the fixed-energy slice $E_{\v r}$ is equivalent to the nonequilibrium free energy of $\rho$ exceeding the maximal free energy attainable by stabilizer states with the same energy. Since every $\sigma\in \mathrm{STAB}_1(E_{\v r})$ has energy $E_{\v r}$, this condition reduces to $S_{\min}^{{\textrm{STAB}_1}}(E_{\v r})$. For qubits, the entropy depends only on the Bloch radius and is strictly decreasing in $\|\v r\|_2$, namely $  S(\rho)=H_2\qty(\frac{1+\|\v r\|_2}{2})$. Therefore, minimizing the entropy over $\mathrm{STAB}_1(E_{\v r})$ is equivalent to maximizing the Bloch radius over the stabilizer slice $\{\v s:\|\v s\|_1\le 1,\ \v h\cdot \v s=E_{\v r}\}$. Hence $S_{\min}^{{\textrm{STAB}_1}}(E_{\v r})$ if and only if $\|\v r\|_2 > R_\star(E_{\v r})$, which proves the claim. Full details are given in Appendix~\hyperref[App:single-qubit-witness]{A-2}.
\end{sproof}

A natural question is whether the heat-based witness certifies nonstabilizerness for all states in a given noisy family up to the true stabilizer threshold. The following corollary answers this question.
\begin{cor}[Optimality]\label{Cor:optimality}
Let $\{\rho_\lambda\}_{\lambda\in[0,1]}$ be a single-qubit family with fixed energy $\tr(\rho_\lambda H)=E_0$, and suppose that $S(\rho_\lambda)$ is strictly increasing in $\lambda$. Define
\begin{equation}
    \lambda_{\star}:=\inf\{\lambda:\rho_\lambda\in \mathrm{STAB}_1\}.
\end{equation}
Then the heat-based witness is tight on this family if and only if $S(\rho_{\lambda_{\star}})= S_{\min}^{\mathrm{STAB}_1|E_0}$. Equivalently,
\begin{equation}\label{Eq:optimality-eq}
    \|\v r_{\lambda_{\star}}\|_2 = \max\Bigl\{\|\v s\|_2:\ \|\v s\|_1\le 1,\ \v h\cdot \v s=E_0\Bigr\}.
\end{equation}
\end{cor}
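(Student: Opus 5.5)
The plan is to reduce everything to Theorem~\ref{Thm:heat-detectability} together with the monotonicity of the qubit entropy in the Bloch radius. I will read ``tight on this family'' as: the witness detects $\rho_\lambda$ for every $\lambda<\lambda_\star$, i.e.\ it certifies nonstabilizerness all the way up to the true stabilizer threshold along the family. I will also assume, as is implicit for a noisy family, that $\lambda\mapsto\rho_\lambda$ is continuous.

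First I show that the threshold $\lambda_\star$ is attained. Since $\mathrm{STAB}_1$ is closed and $\lambda\mapsto\rho_\lambda$ is continuous, the set $\{\lambda:\rho_\lambda\in\mathrm{STAB}_1\}$ is closed. Hence $\rho_{\lambda_\star}\in\mathrm{STAB}_1$. Because $\tr(\rho_{\lambda_\star}H)=E_0$, in fact $\rho_{\lambda_\star}\in\mathrm{STAB}_1(E_0)$, and therefore
\begin{equation}
S(\rho_{\lambda_\star})\ge S_{\min}^{\mathrm{STAB}_1|E_0}.
\end{equation}
This inequality holds for every such family; tightness will correspond to the case of equality.

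Next I describe the detected set. By the proof of Theorem~\ref{Thm:heat-detectability}, on the energy slice $E_0$ the witness detects $\rho_\lambda$ exactly when $S(\rho_\lambda)<S_{\min}^{\mathrm{STAB}_1|E_0}$. Since $S(\rho_\lambda)$ is strictly increasing and continuous in $\lambda$, the detected parameters form an initial interval $[0,\lambda_d)$. Here $\lambda_d$ is the unique point with $S(\rho_{\lambda_d})=S_{\min}^{\mathrm{STAB}_1|E_0}$, or an endpoint of $[0,1]$ if no such point exists. The previous inequality then gives $\lambda_d\le\lambda_\star$, which is the soundness direction and is automatic.

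Tightness means $\lambda_d=\lambda_\star$. If $S(\rho_{\lambda_\star})=S_{\min}^{\mathrm{STAB}_1|E_0}$, then strict monotonicity gives $S(\rho_\lambda)<S_{\min}^{\mathrm{STAB}_1|E_0}$ for all $\lambda<\lambda_\star$, so every such state is detected. Conversely, suppose $S(\rho_{\lambda_\star})>S_{\min}^{\mathrm{STAB}_1|E_0}$. By continuity there is an $\epsilon>0$ such that $S(\rho_\lambda)>S_{\min}^{\mathrm{STAB}_1|E_0}$ for $\lambda\in(\lambda_\star-\epsilon,\lambda_\star)$. These states are nonstabilizer by the definition of $\lambda_\star$, yet they are not detected, so the witness is not tight.

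Finally, for the equivalent form~\eqref{Eq:optimality-eq} I use $S(\rho)=H_2\bigl(\tfrac{1+\|\v r\|_2}{2}\bigr)$, which is strictly decreasing in $\|\v r\|_2$. This turns the entropy equality into $\|\v r_{\lambda_\star}\|_2=R_\star(E_0)$, as in the sproof of the Theorem. The main obstacle I expect is the boundary case $\lambda_\star=0$, where the family contains no nonstabilizer members below the threshold and the equivalence can degenerate; I would treat it separately or exclude it by convention.
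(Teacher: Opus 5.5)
Your proposal is correct and follows the same route as the paper. It reduces to the fixed-energy criterion $S(\rho_\lambda)<S_{\min}^{\mathrm{STAB}_1|E_0}$ from Theorem~\ref{Thm:heat-detectability}. It then uses strict monotonicity of $S(\rho_\lambda)$ to get an initial detected interval whose endpoint must coincide with $\lambda_\star$. Finally, it translates the entropy equality via the strictly decreasing map $\|\v r\|_2\mapsto H_2\bigl(\tfrac{1+\|\v r\|_2}{2}\bigr)$.

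Your additions are genuine refinements that the paper leaves implicit:
\begin{itemize}
\item the explicit continuity assumption, which is needed both for $\rho_{\lambda_\star}\in\mathrm{STAB}_1(E_0)$ and for the ``only if'' direction;
\item the soundness inequality $\lambda_d\le\lambda_\star$;
\item the caveat that the equivalence degenerates when $\lambda_\star=0$.
\end{itemize}
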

\begin{sproof}
Since $\tr(\rho_\lambda H)=E_0$ for all $\lambda$, Theorem~\ref{Thm:heat-detectability} applies on the same fixed-energy slice throughout the family and gives $\rho_\lambda$ is detected if and only if $S(\rho_\lambda)<S_{\min}^{\mathrm{STAB}_1|E_0}$. Because $S(\rho_\lambda)$ is strictly increasing in $\lambda$, the detected states form an initial interval in $\lambda$, ending at the unique point where the entropy reaches the stabilizer minimum on that slice. Hence the witness is tight if and only if this happens exactly at the true stabilizer threshold $\lambda_\star$, namely $S(\rho_{\lambda_\star})=S_{\min}^{\mathrm{STAB}_1|E_0}$. Using again that, for qubits, the entropy is a strictly decreasing function of the Bloch radius, this is equivalent to $\|\v r_{\lambda_\star}\|_2 = R_\star(E_{\v r})$, which proves the corollary. Full details are given in Appendix~\hyperref[App:single-qubit-witness]{A-2}.
\end{sproof}

To gain some intuition about the previous results, consider a family of states with fixed average energy defined as a mixture of a magical state and a stabilizer state:
\begin{equation}\label{Eq:family-state}
    \rho_\lambda = (1-\lambda) \rho_{\text{magic}} + \lambda\rho_{\text{stab}} \:\:\: \text{with}\:\:\: \tr(\rho_\lambda H) = E_0.
\end{equation}
where $H$ is a non-interacting Hamiltonian and $\rho_{\text{magic}}$ is a magical state and $\rho_{\text{stab}} \in \textrm{STAB}_n$.  
\begin{figure}[t]
    \centering
    \includegraphics{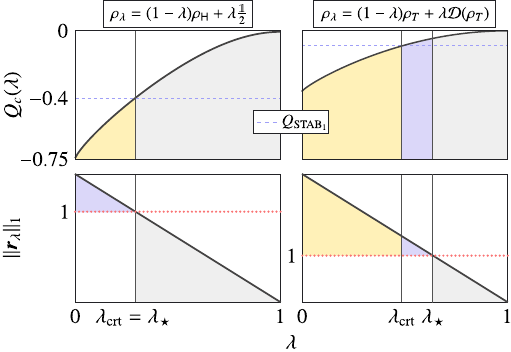}
    \caption{\emph{Magic detection via heat exchange for single-qubit noisy families.} Top row: optimal cooling heat $Q_c(\rho_\lambda)$ (solid black curve) as a function of the noise parameter $\lambda$, for a qubit coupled to a thermal environment at inverse temperature $\beta=1$. The dashed blue horizontal line marks the stabilizer heat threshold $Q_{\mathrm{STAB}1}$. The yellow region indicates the values of $\lambda$ for which the heat-based witness certifies nonstabilizerness, namely when $Q_c(\rho_\lambda)<Q_{\mathrm{STAB}1}$. The vertical line at $\lambda_{\mathrm{crt}}$ marks the point where the witness ceases to detect magic. Bottom row: the stabilizer criterion with the red dotted line marking the boundary $\|\v r_\lambda\|_1=1$. The vertical line at $\lambda_\star$ denotes the true stabilizer threshold, so that states are nonstabilizer for $\lambda<\lambda_\star$. In the left panel, $\lambda_{\mathrm{crt}}=\lambda_\star$, showing that the heat-based witness is optimal along the family. In the right panel, $\lambda_{\mathrm{crt}}<\lambda_\star$, so the witness is not optimal: there is an intermediate interval in which the state remains nonstabilizer but is no longer detected by heat exchange.}
    \label{F:heat-singlequbit-plot}
\end{figure}
We now compare two one-parameter single-qubit families for which direct energy measurements alone are inconclusive. In the first, the Hamiltonian has a magic ground state and a positive stabilizer gap, yet the family remains on a fixed-energy slice and therefore cannot be distinguished by average energy alone. In the second case, the Hamiltonian has stabilizer eigenstates and a vanishing stabilizer gap. Whenever direct energy measurement is not conclusive, one can instead turn to the heat-based witness, which requires only knowledge of the system’s average energy together with the minimum entropy attainable within the stabilizer polytope at that energy.

These two features, which are not captured by direct energy measurement but are revealed by the heat-based witness, are illustrated in Fig.~\ref{F:heat-singlequbit-plot}. Here, we consider the optimal heat exchange between a single qubit and a thermal environment, focusing on the case in which the system warms up while the environment cools down. Analogous bounds can be derived for the opposite branch, in which the system cools down while the environment heats up. First, we consider a qubit described by the Hamiltonian $H_1 = \v h_{1} \cdot \v \sigma$, where $\v h_1 \propto (1,-1,-1)$, whose eigenstates are nonstabilizer states [top row and left panel of Fig.~\ref{F:heat-singlequbit-plot}]. In particular, $H_1$ has a magic ground state and a positive stabilizer gap. We take the depolarized $H$-state family $\ketbra{\ms H} = \tfrac12(\iden + \tfrac{X+Z}{\sqrt{2}})$. Since $\tr(\rho_\lambda H_1)=0$ for all $\lambda$, the direct energy witness cannot distinguish states within this family, despite the fact that $H_1$ itself has a magic ground state. The Bloch vector of $\rho_\lambda$ is $\v r_\lambda=(1-\lambda)\Bigl(\tfrac{1}{\sqrt2},0,\tfrac{1}{\sqrt2}\Bigr)$, so that $\|\v r_\lambda\|_1=\sqrt2(1-\lambda)$ and $\|\v r_\lambda\|_2=1-\lambda$, hence the true stabilizer threshold is $\lambda_\star=1-\tfrac{1}{\sqrt2}$ [bottom row and left panel of Fig.~\ref{F:heat-singlequbit-plot}]. On the other hand, the family lies on the slice $E_0=0$, for which $R_\star(0)=2^{-1/2}$. Therefore, the heat-detection threshold is $\lambda_{\mathrm{crt}}=1-2^{-1/2}=\lambda_\star$. Thus, the heat-based witness is optimal along this family: it certifies nonstabilizerness up to the exact point where the state enters the stabilizer polytope.

Next, we consider the Hamiltonian $H_2=\sigma_z$ whose eigenstates are stabilizer states, so that $\delta_{\mathrm{STAB}}(H_2)=0$. We then take the dephased noisy $T$-state family $\rho_\lambda=(1-\lambda)\rho_T+\lambda\,\mathcal D(\rho_T)$, with $\rho_T=\tfrac12\Bigl(\openone+\tfrac{X+Y+Z}{\sqrt3}\Bigr)$, and $\mathcal D$ denotes dephasing in the eigenbasis of $Z$. In this case, $\tr(\rho_\lambda H_2)=\tfrac{1}{\sqrt3}$ for all $\lambda$, so once again the direct energy witness is inconclusive along the family. The corresponding Bloch vector is $\v r_\lambda= \Bigl(\tfrac{1-\lambda}{\sqrt3},\tfrac{1-\lambda}{\sqrt3},\tfrac{1}{\sqrt3}\Bigr)$, and therefore $\|\v r_\lambda\|_1=\frac{3-2\lambda}{\sqrt3}$. This gives the stabilizer threshold $\lambda_\star=\tfrac{3-\sqrt3}{2}$. However, the heat criterion is more restrictive. In the slice $E_0=3^{-1/2}$ one finds $R_\star(E_0)=\sqrt{\frac{5-2\sqrt3}{3}}$, which implies $\lambda_{\mathrm{crt}}=1-\sqrt{2-\sqrt3}<\lambda_\star$. Hence, there exists an interval $\lambda_{\mathrm{crt}}<\lambda<\lambda_\star$ in which the states remain nonstabilizer but are too mixed to be certified by the heat witness.
\begin{figure}[b]
    \centering
    \includegraphics{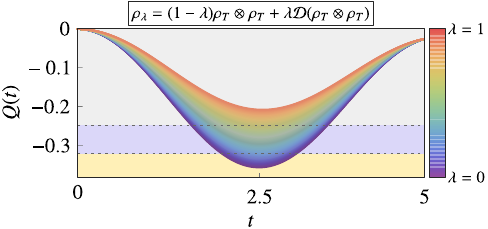}
    \caption{\emph{Magic detection in Tavis-Cummings model}. Heat exchange between two atoms and a single-mode optical cavity, which acts as an environment at inverse temperature $\beta = 1.5$, is shown as a function of time. The composite system is resonant with $\varepsilon = g = 1$. The two atoms are described by the Hamiltonian $H = Z_1 + Z_2$ and are prepared in the state $\rho_\lambda = (1 - \lambda)\rho_T^{\otimes 2} + \lambda\mathcal{D}(\rho_T^{\otimes 2})$. The yellow region represents nonstabilizer states that are detectable by our heat-based witness, while magical states that are undetectable by our witness are shown in the purple region. Stabilizer states lie in the gray region.}
    \label{F-Tavis-cummings}
\end{figure}
\begin{mybox}{Magic detection in the Tavis-Cummings model}
As an illustrative example of our heat-based witness, we consider two qubits coupled to a single bosonic mode $\ms{E}$ that plays the role of a thermal environment. The joint dynamics is governed by the resonant Tavis--Cummings Hamiltonian~\cite{Tavis1968}
\begin{align}
    H_{\ms{SE}} = \epsilon(Z_1 + Z_2) + \epsilon a_{\ms{E}}^\dagger a_{\ms{E}}
    + g \sum_{j=1}^2 \left(a_{\ms{E}} \sigma_j^\dagger + a_{\ms{E}}^\dagger \sigma_j\right),
\end{align}
where $a_{\ms{E}}$ is the annihilation operator of the environmental mode and $\sigma_j = \ket{0}\!\bra{1}_j$ is the lowering operator of qubit $j$. The environment is initially prepared in a thermal state at inverse temperature $\beta$, while the two-qubit system is prepared in the family
\begin{align}
    \rho_\lambda = (1-\lambda)\rho_T^{\otimes 2} + \lambda \mathcal{D}(\rho_T^{\otimes 2}).
\end{align}
Figure~\ref{F-Tavis-cummings} shows the heat exchanged with the environment as a function of time for different values of $\lambda$, for $\beta=1.5$ and $\epsilon=g=1$. As $\lambda$ increases, the heat curve moves upwards, capturing the loss of nonstabilizerness along the family. The dashed horizontal line marks the stabilizer heat threshold for this fixed energy. Heat values below this threshold certify nonstabilizerness, so the yellow region corresponds to states detected by the witness, while the purple region contains nonstabilizer states that are not detected. Stabilizer states lie in the gray region. 
\end{mybox}
\begin{figure*}
    \centering
    \includegraphics{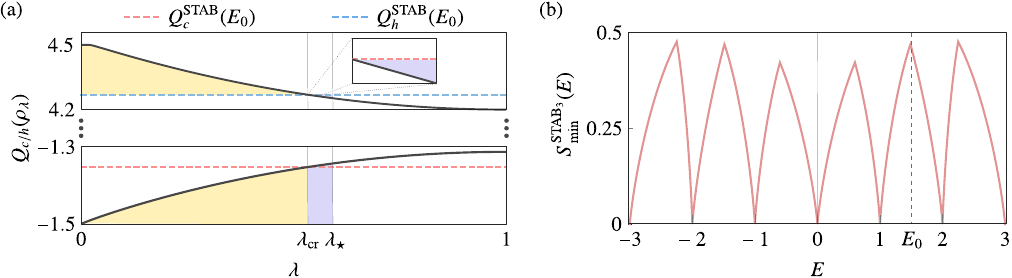}
    \caption{\emph{Magic detection via heat exchange for three-qubit system} (a) Heat exchange between a three-qubit system and an environment at inverse temperature $\beta = 0.01$ (dimensionless), shown as a function of $\lambda$. The system is prepared in the state $\rho_\lambda = (1-\lambda)\ketbra{\psi}+\lambda\,\mathcal{D}(\ketbra{\psi})$, with $\ket{\psi} \propto \sqrt{3}\ket{000}+e^{i\pi/4}\ket{111}$, and is characterized by the Hamiltonian $H = Z_1 + Z_2 + Z_3$. The state $\rho_\lambda$ is magical for $\lambda < \lambda_\star \approx 0.59$. The yellow area represents the range of $\lambda$ for which nonstabilizerness is detectable by our heat-based witness. The critical parameter $\lambda_{\text{crt}} \approx 0.53$ marks the point at which our witness ceases to detect it. Magical states that are undetectable by our witness are represented in the purple region. (b) Minimal entropy within the stabilizer polytope of three qubits as a function of energy. The dashed line represents the average energy of the family. A state $\rho$ is detectable by our witness when $S(\rho)<S_{\min}^{\textrm{STAB}{_3}}$.}
    \label{F:GHZ-magic}
\end{figure*}
These two examples show that optimality of the heat-based witness is not determined solely by whether the Hamiltonian has magic eigenstates. Rather, it depends on the geometry of the fixed-energy slice selected by the Hamiltonian and on how the noisy family approaches the stabilizer boundary. In the first example, the family reaches the stabilizer set through an entropy-minimizing point of the slice, whereas in the second it does not.

Beyond the single-qubit case, no closed Bloch-geometric formula is available in general. Nevertheless, the same fixed-energy principle remains valid: conditioned on the measured energy $E_0$, heat-based detection is governed by whether the state entropy lies below the minimum stabilizer entropy in the slice $\text{STAB}_n(E_0)$. We now illustrate this by taking a three-qubit example. Consider the noninteracting Hamiltonian $H = Z_1+Z_2+Z_3$ and the family $\rho_\lambda=(1-\lambda)\ketbra{\psi}+\lambda \mathcal D(\ketbra{\psi})$, where $|\psi\rangle = \tfrac{1}{2} (\sqrt{3}|000\rangle + e^{i\pi/4}|111\rangle)$ is a genuine multipartite entangled and magic state and $\mathcal D$ denotes dephasing in the eigenbasis of $H$ with $\mathcal{D}(\ketbra{\psi})$ being a stabilizer state. The resulting heat bounds are shown in Fig.~\hyperref[F:GHZ-magic]{\ref{F:GHZ-magic}(a)}. The heat-based witness detects nonstabilizerness up to the critical value $\lambda_{\mathrm{crt}}\approx 0.53$, whereas the true stabilizer threshold is $\lambda_\star\approx 0.59$. Hence, there exists a finite interval $\lambda_{\mathrm{crt}}<\lambda<\lambda_\star$ in which the states remain nonstabilizer but are already too mixed to be certified by heat exchange. In this sense, the witness is not tight along this family. At the same time, direct energy measurement is completely inconclusive: not only is the energy constant along the path, but the Hamiltonian itself has a vanishing stabilizer gap, since its ground space contains stabilizer states.

To understand why heat detection is nevertheless possible, one must look at the minimum stabilizer entropy on the corresponding fixed-energy slice. This quantity, shown in Fig.~\hyperref[F:GHZ-magic]{\ref{F:GHZ-magic}(b)}, is $S_{\min}^{\mathrm{STAB}_3}(E)$. At the relevant energy $E_0=3/2$, one finds $S_{\min}^{\mathrm{STAB}_3}(E_0)>0$. This means that every stabilizer state compatible with that energy is necessarily mixed, so stabilizer states on that slice cannot have arbitrarily low entropy. Since in a fixed-energy slice the free energy differs from the entropy only by an additive constant, this produces a free energy gap: any state with the same energy but entropy below $S_{\min}^{\mathrm{STAB}_3}(E_0)$ is detectable by the heat witness. 

More generally, Fig.~\hyperref[F:GHZ-magic]{\ref{F:GHZ-magic}(b)} identifies the energies for which heat-based detection can be nontrivial under the Hamiltonian $H$. For example, the $W$ state has an average energy $E_0=1$. In this slice, the stabilizer set already contains pure stabilizer states and therefore $S_{\min}^{\mathrm{STAB}_3}(1)=0$. Consequently, no state with energy $E_0=1$ can be certified by the heat witness, regardless of how magical it is, because a pure stabilizer state with the same energy has the same entropy and therefore the same free energy and optimal heat exchange. This shows that greater magic does not necessarily imply stronger heat-based detectability: the witness is controlled jointly by energy and entropy through the geometry of the fixed-energy slice. 

The example also clarifies why the witness is not tight. In the optimal single-qubit example of Fig.~\ref{F:heat-singlequbit-plot}, the noisy path reaches the stabilizer set through an entropy-minimizing point of the fixed-energy slice. Here this does not happen. Instead, the family intersects the stabilizer polytope away from such a point, so the entropy of $\rho_\lambda$ reaches the stabilizer entropy threshold before the true stabilizer boundary is reached. As a result, heat detection ceases at $\lambda_{\mathrm{crt}}<\lambda_\star$. For fixed-energy families, the detectability threshold is independent of $\beta$: although the heat values themselves depend on the bath temperature, the threshold $\lambda_{\mathrm{crt}}$ is determined by the entropy crossing $S(\rho_\lambda)=S_{\min}^{\mathrm{STAB}_n}(E_0)$, which contains no explicit $\beta$-dependence. 

Finally, this three-qubit example shows that heat exchange can reveal nonstabilizerness even when direct energy measurements are completely uninformative. At the same time, it also makes clear that tightness is exceptional rather than generic: it depends on the interplay between the Hamiltonian-induced energy slice and the direction along which the noisy family approaches the stabilizer boundary.

\section{Outlook}
We have introduced two operational routes to certifying magic using experimentally meaningful observables, bypassing the need for full tomography. The first is a direct energy witness, based on the stabilizer threshold—the minimum energy achievable by any stabilizer state for a given Hamiltonian. Any state with energy below this threshold is necessarily nonstabilizer. The difference between this threshold and the true ground-state energy defines the stabilizer gap, which quantifies how effective a Hamiltonian is for witnessing magic through energy measurements alone. The second approach is a heat-based witness, tailored for regimes where energy measurements are inconclusive. By considering optimal heat exchange between the system and a thermal ancilla, assisted by a cyclic quantum memory, this method incorporates both energetic and entropic features. As a result, it can distinguish states with identical average energy but different internal structure, making it inherently nonlinear and, in some scenarios, strictly more powerful than the direct energy witness. Pratically, the heat witness requires prior knowledge of the threshold $S_{\min}^{{\textrm{STAB}_n}}(E_0)$. Beyond the single-qubit case, this quantity is not known in closed form in general and must be obtained through a separate classical optimization over stabilizer states compatible with the measured energy. In this work, we evaluate for some examples. Thus, the experimental data required by the witness remain simple--the pair $(E_0, Q)$--but the classical preprocessing needed to determine the stabilizer heat window is model-dependent.

These results open several natural directions for future investigation. First, both witnesses are intrinsically binary: they detect the presence of nonstabilizerness but do not quantify it. A key next step is to relate the magnitude of the violation of the energy or heat bounds to established magic measures, thereby promoting these thermodynamic criteria to quantitative resource estimators. This direction is particularly compelling in light of the growing body of work connecting nonstabilizerness and thermodynamics~\cite{Koukoulekidis2022,junior2025tradingathermalitynonstabiliserness,trigueros2025nonstabilizernesserrorresiliencenoisy}.

Second, the witnessing power of a Hamiltonian is highly model-dependent. The stabilizer gap may vanish even for interacting, nonstabilizer Hamiltonians, as exemplified by the Heisenberg chain. This raises the question of whether one can systematically design—or even learn—Hamiltonians that maximize the stabilizer gap for a fixed system size or experimental constraint. Such constructions could serve as “optimal thermometers of magic" and may reveal a deeper connection to the geometry of the stabilizer polytope, for example, through its facet-defining operators.

Finally, it would be interesting to explore the robustness and scalability of our witnesses in realistic experimental settings, particularly in light of recent advances in measurement-efficient protocols inspired by classical shadow tomography. Instead of relying on full tomography, the nonstabilizerness of a quantum state can be inferred with sample complexity that scales only logarithmically in the number of observables of interest~\cite{varela2026predictingmagicmeasurements}. This suggests a promising route to implement our energy- and heat-based witnesses in a highly resource-efficient manner, where the required expectation values could be extracted from shadow data rather than dedicated measurements. Establishing a concrete integration between thermodynamic witnesses of nonstabilizerness and shadow-based protocols may therefore provide a scalable and experimentally viable framework for certifying magic in near-term quantum devices.

\label{S:outlook}

\begin{acknowledgments}
We acknowledge funding from Coordenação de Aperfeiçoamento de Pessoal de Nível Superior – Brasil (CAPES) – Finance Code 001, the Simons Foundation (Grant No. 1023171, R.C.), the Brazilian National Council for Scientific and Technological Development (CNPq, Grants No.403181/2024-0, 308065/2022-0, and 301687/2025-0), the National Institute of Science and Technology for Applied Quantum Computing through (CNPq grant 408884/2024-0), the Financiadora de Estudos e Projetos (Grant No. 1699/24 IIF-FINEP), a guest professorship from the Otto M\o nsted Foundation, the Danish National Research Foundation grant bigQ (DNRF 142) and EU Horizon Europe (QSNP, grant no.
10111404). LCC acknowledges support from FAPEG through grant 202510267001843, and FAPESP through grant 2025/23726-4.
\end{acknowledgments}

\emph{Data availability--} The code is publicly available on GitHub at~\href{https://github.com/AdeOliveiraJunior}{Thermodynamic witnesses of magic}

\bibliography{2-references}

\appendix
\onecolumngrid

\section{Heat-based witness for nonstabilizerness}
\label{Sec:app-optimal-heat-exchange}

In this appendix, we recall the main arguments leading to the expression for the optimal heat exchange. Rather than reproducing the full technical proof, our aim is to convey the key conceptual steps and physical insights behind the result. A complete and rigorous derivation can be found in Ref.~\cite{de2025heat}. We then leverage these results to construct a heat-based witness for nonstabilizerness, and conclude by proving Theorem~\ref{Thm:heat-detectability} and Corollary~\ref{Cor:optimality}.

\subsection{The optimal heat exchange problem}

We consider a system $\S$ with Hamiltonian $H_\S$, an environment $\E$ prepared in the Gibbs state $\gamma_\E(\beta)$ at inverse temperature $\beta$, and a quantum memory $\M$. The global evolution is governed by an energy-preserving unitary $U$ satisfying $[U,H_\S+H_\M+H_\E]=0$, which maps the initial uncorrelated state $\rho_\S\otimes\rho_\M\otimes\gamma_\E$ to a correlated state $\eta_{\S\M\E}$. The only constraints are energy conservation and the catalytic condition that the memory returns to its initial state $\rho_\M$ at the protocol's end, i.e., $\eta_\M=\rho_\M$ where $\eta_\M:=\tr_{\S\E}(\eta_{\S\M\E})$. Since no external work is supplied, the only energetic exchange is heat between $\S$ and $\E$, defined as $Q=\tr[H_\E(\eta_\E-\gamma_\E)]$, where $\eta_\X$ denotes the final state of subsystem $\X\in\{\S,\E,\M\}$.

Our goal is to determine the optimal heat exchanged with the environment, specifically the heat released $Q_c$ and the heat absorbed $Q_h$, for a fixed initial state $\rho_\S$. Formally,
\begin{align}
\label{app-Eq:app-optimal-heat}
\begin{split}
    Q_{c/h}(\rho_{\ms S}) :=   \underset{H_{\ms E}, H_{\ms M}, U, \rho_{\ms{M}}}{\min/\max} \:\: &\tr[H_{\ms{E}}(\eta_{\ms{E}} - \gamma_{\ms{E}})], \\ 
\textrm{s.t.}  \quad\quad &[U, H_{\ms{S}} +H_{\ms{M}} + H_{\ms{E}}] = 0, \\   &\eta_{\ms M}=\rho_{\ms{M}}. 
\end{split}
\end{align}
Although Eq.~\eqref{app-Eq:app-optimal-heat} appears complex, this optimization admits a simple reduction. Every feasible transformation must satisfy the free-energy inequality $F_\beta(\rho_S) \ge F_\beta(\eta_S)$~\cite{brandao2015second,shiraishi2025}. Conversely, within catalytic coherent thermal operations, any target state $\eta_S$ fulfilling this constraint can be approximated arbitrarily well with a sufficiently large catalyst (see~\cite{Bartosik2023, de2025heat} for the constructive proof). Therefore, the optimization in \eqref{Eq:app-optimal-heat} is equivalent to the convex program:
\begin{equation}
\begin{aligned}
\label{Eq:app-optimal-heat-2}
    Q_{c/h}(\rho_{\ms S}) := \underset{\eta_{\ms S}}{\min\!/\!\max} \:\: &\tr[H_{\ms{S}}(\rho_{\ms S}-\eta_{\ms S})],\\
&\hspace{-0.95cm}\textrm{s.t.} \:\:\:\:\:\:\:\:  F_{\beta}(\rho_{\ms S}) \geq F_{\beta}(\eta_{\ms S}).
\end{aligned}
\end{equation}
This reformulation isolates the single information-theoretic constraint that governs all achievable heat exchanges. The free-energy constraint is saturated at the optimum. Introducing the thermal states of the system parameterized by an effective inverse temperature $x$:
\begin{equation}
\gamma_\S(x):=\frac{e^{-xH_\S}}{\tr(e^{-xH_\S})}.
\end{equation}
The optimal final state must then be of the form $\eta_\S=\gamma(x)$, where $x$ is determined implicitly by $F_\beta(\rho_\S)=F_\beta[\gamma(x)]$. The behavior of the root equation follows from
\begin{equation}
\frac{d}{dx}F_\beta[\gamma(x)] = \qty(\frac{x}{\beta}-1)\mathrm{var}_{\gamma(x)}(H_\S),
\label{Eq:freeenergyder}
\end{equation}
which shows that $F_\beta[\gamma(x)]$ decreases for $x<\beta$, attains a unique minimum at $x=\beta$, and increases for $x>\beta$ (see Fig.~\hyperref[F-app-free-energy-heat]{\ref{F-app-free-energy-heat}(a)} for an illustration). This guarantees at most two real roots, which are two inverse temperatures $\beta_c\le \beta_h$ associated with the cooling and heating solutions. The optimal heats follow immediately:
\begin{equation}\label{Eq:app-Q-opt}
Q_{c/h}(\rho_\S)=E(\rho_\S)-E[\gamma_\S(\beta_{c/h})].
\end{equation}

Equipped with the optimal-heat formula, we construct a witness for the hypothesis that $\rho_S$ belongs to a specific set $\mathcal{S}$ of states, which we will assume to be convex (e.g., separable states, incoherent states, or stabilizers).
However, the heat bound derived from this approach depends on the state's energy. Since on many occasions the average energy $E_0 := E(\rho_\S)$ can be measured prior to the protocol, we condition on this value and refine the hypothesis to the set $\mathcal{S}(E_0) := \{\sigma \in \mathcal{S} : E(\sigma) = E_0\}$. We then maximize the free energy over $\mathcal{S}(E_0)$. Because $F_\beta(\sigma) = E_0 - \beta^{-1} S(\sigma)$ for $\sigma \in \mathcal{S}(E_0)$, this reduces to minimizing the entropy
\begin{equation}\label{Eq:app-free-energy-conditioned}
    F^{\star}_\beta(\mathcal{S}|E_0):=\max_{\sigma\in \mathcal{S}(E_0)}F_\beta(\sigma) = E_0 -\frac{1}{\beta}S_{\min}^{\mathcal{S}|E_0},
\end{equation}
where $S_{\min}^{\mathcal S|E_0}:=\min_{\sigma\in\mathcal S(E_0)} S(\sigma)$. Then, solving $F_\beta[\gamma(x^\star_{c/h})] =F^{\star}_\beta(\mathcal{S}|E_0)$ gives the effective temperatures $x^\star_{c/h}$. Every state $\sigma \in \mathcal{S}(E_0)$ then satisfies
\begin{equation}\label{Eq:app-heat-stabilizer-bound}
Q(\mathcal{S}|E_0)\in\big[E_0-E(x^\star_c),\ E_0-E(x^\star_h)\big],
\end{equation}
so an observed heat outside this interval refutes the hypothesis that $\rho_S \in \mathcal{S}(E_0)$, and hence that $\rho_S \in \mathcal{S}$.

\subsection{Single-qubit witness}\label{App:single-qubit-witness}

We now apply the framework developed in the previous sections to the single-qubit case, construct a heat-based witness for nonstabilizerness, and prove Theorem~\ref{Thm:heat-detectability} and Corollary~\ref{Cor:optimality}. Let $\mathcal{S} = \mathrm{STAB}_1$ and consider a system $\S$ described by a traceless, unit-gap Hamiltonian $H_\S= \v{h} \cdot \v{\sigma}$ with spectrum $\{\pm 1\}$.

The thermal Gibbs state at inverse temperature $x$ is given by $\gamma(x) = \operatorname{diag}\left(\tfrac{1}{1+e^{2x}},\tfrac{e^{2x}}{1+e^{2x}}\right)$. For this state, the relevant thermodynamic quantities are:
\begin{align}
E[\gamma_\S(x)] &= -\tanh x, \\
S[\gamma_\S(x)] &= \log(2\cosh x) - x\tanh x.
\end{align}
The non-equilibrium free energy of this state is then $F_\beta[\gamma_\S(x)] = -\frac{\beta-x}{\beta}\tanh x-\frac{1}{\beta}\log(2\cosh x)$. 

Given the measured average energy $E_0:=E(\rho_\S)$, we restrict to the energy-constrained subset $\mathrm{STAB}_1(E_0):=\{\sigma \in \mathrm{STAB}_1: E(\sigma) = E_0\}$. Consequently, our figure of merit [Eq.~\ref{Eq:app-free-energy-conditioned}] is
\begin{equation}\label{Eq:app-free-energy-constrained-stab}
    F^{\star}_\beta(\stab|E_0):=\max_{\sigma\in \stab(E_0)}F_\beta(\sigma) = E_0 -\frac{1}{\beta}S_{\min}^{\stab|E_0}.
\end{equation}
To compute this minimum entropy, we note that the qubit entropy is a strictly decreasing function of the Bloch radius for $\|\v r\|_2 >0,$
\begin{equation}
    S(\rho) = H_2\qty(\frac{1+\|\v r\|_2}{2}) \quad \text{where} \quad H_2(p):=-p\log p - (1-p)\log (1-p).
\end{equation}
Hence, minimizing $S(\rho)$ at fixed energy $E_0$ is equivalent to maximizing $\|\v r\|_2$  over $\{\v r: \|\v r\|_1\leq 1, \v h \cdot \v r = E_0\}$. Since $\|\v r\|_2$ is a convex function and the feasible region is a convex polytope, the maximum is attained at an extreme point. For the octahedron intersected with the plane $\v h \cdot \v r = E_0$, the optimizers lie on edges where $\|\v r\|_1 =1$ and exactly two components of $\v{r}$ are nonzero.
 \begin{figure*} 
    \centering
    \includegraphics{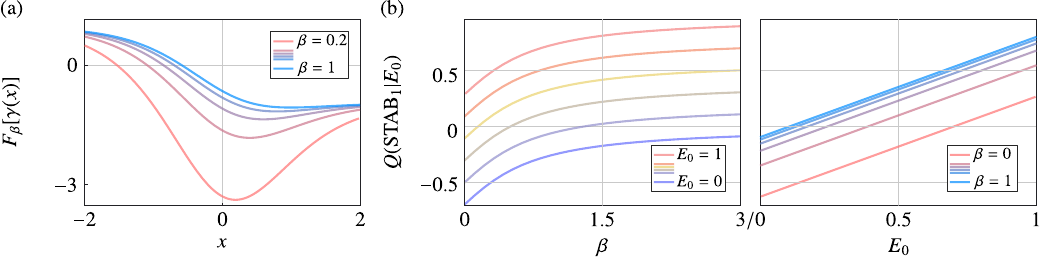}
    \caption{\emph{Free energy \& heat-based witness}. (a) Nonequilibrium free energy as a function of the effective temperature $x$ for different values of $\beta$. (b) Heat-based witness as a function of $\beta$ for different values of $E_0$ and vice-versa}
    \label{F-app-free-energy-heat}
\end{figure*}

Without loss of generality, consider an edge corresponding to axes $(i,j)$. Set $r_k =0$ for $k\neq i, j$, and choose signs $s_i, s_j \in \{\pm 1\}$, such that the edge can satisfy $\v h \cdot \v r = E_0$. Let $a:=|h_i|, b:=|h_j|$ and parametrize $r_i = s_i t$, $r_j = s_j(1-t)$, $r_k=0$, with $s_i h_i > 0$ and $s_j h_j <0$. Then the energy constraint becomes
\begin{equation}
    at - b (1-t) = E_0 \Longrightarrow t = \frac{E_0+b}{a+b}, 
\end{equation}
which is feasible if and only if $-b \leq E_0 \leq a$. On such a feasible edge, the Bloch radius is
\begin{equation}\label{Eq:app-Bloch-radius}
    \|\v r\|_2 = \sqrt{t^2+(1-t)^2} = \frac{\sqrt{(a-E_0)^2+(E_0+b)^2}}{a+b}.
\end{equation}
Maximizing Eq.~\eqref{Eq:app-Bloch-radius} over all feasible pairs $(i,j)$ gives $\|\v r\|_2^{\max}(E_0)$ and hence 
\begin{equation}\label{Eq:app-S-stab-min}
    S_{\min}^{\stab|E_0}= H_2\qty(\frac{1+\|\v r\|_2^{\max}(E_0)}{2}).
\end{equation}
Note that if we choose $H_\S$ with $|h_x|=|h_y|=|h_z|=\tfrac{1}{\sqrt{3}}$ and ensure $E_0=0$ (e.g. by taking $\v{h}\perp \v r_T$ for the $T$ state), then $a=b$ and \eqref{Eq:app-Bloch-radius} gives $t=\tfrac12$.
Equation~\eqref{Eq:app-Bloch-radius} then yields $\|r\|_2^{\max}(0)=\tfrac{1}{\sqrt{2}}$ and
\begin{equation}
S_{\min}^{\stab|E_0}=H_2\!\left(\frac{1+\tfrac{1}{\sqrt{2}}}{2}\right).
\label{Eq:Smin-balanced}
\end{equation}
Finally, we substitute the maximum stabilizer free energy $F_\beta^{\star}(\mathrm{STAB}_1|E_0)$ from Eq.~\eqref{Eq:app-free-energy-constrained-stab} into the equation $F_\beta[\gamma(x^\star)] = F_\beta^{\star}(\mathrm{STAB}_1|E_0)$ to determine $x^\star$, and then insert into Eq.~\eqref{Eq:app-heat-stabilizer-bound} to obtain the heat-based stabilizer bound:
\begin{equation}\label{Eq:app-heat}
    Q(\stab|E_0)\in\qty[Q_{\min}^{\stab|E_0},Q_{\max}^{\stab|E_0}]:=\qty[E_0 - E(x^\star_c), E_0-E(x^\star_h)].
\end{equation}
In Fig.~\hyperref[F-app-free-energy-heat]{\ref{F-app-free-energy-heat}(b)}, we illustrate the behavior of the heat-based witness as a function of both $\beta$ and $E_0$.

Assume $\stab(E_{\v r})\neq\emptyset$, so that the constrained stabilizer threshold is well-defined. By the construction above, the heat-based stabilizer witness detects $\rho$ exactly when the optimal heat attainable from $\rho$ lies outside the stabilizer window associated with the same energy slice. By the fixed-energy free-energy reduction in Eq.~\eqref{Eq:app-free-energy-constrained-stab} and the heat bounds in Eq.~\eqref{Eq:app-heat}, this is equivalent to $F_\beta(\rho) > F_\beta^\star(\stab|E_{\v r})$. Since $E(\rho)=E_{\v r}$ and every $\sigma\in\stab(E_{\v r})$ has the same energy $E_{\v r}$, we have $F_\beta(\rho)=E_{\v r}-\frac{1}{\beta}S(\rho)$ and $ F_\beta^\star(\stab|E_{\v r})=E_{\v r}-\frac{1}{\beta}S_{\min}^{\stab|E_{\v r}}$. Therefore, the witness detects $\rho$ if and only if $S(\rho)<S_{\min}^{\stab|E_{\v r}}$. Now, for a qubit state with Bloch vector $\v u$, the entropy is $ S[\rho(\v u)] = H_2\qty(\frac{1+\|\v u\|_2}{2})$, which is a strictly decreasing function of $\|\v u\|_2$. Hence, minimizing the entropy over $\stab(E_{\v r})$ is equivalent to maximizing the Bloch radius over the same slice, as discussed above Eq.~\eqref{Eq:app-S-stab-min}. It follows that
\begin{equation}
    S(\rho)<S_{\min}^{\stab|E_{\v r}} \iff \|\v r\|_2 > \max\qty{\|\v s\|_2:\|\v s\|_1\leq 1,\ \v h\cdot \v s = E_{\v r}}.
\end{equation}
By definition, the right-hand side is precisely $R_\star(E_{\v r})$, and thus $\rho$ is detected if and only if $\|\v r\|_2 > R_\star(E_{\v r})$, which proves Theorem~\ref{Thm:heat-detectability}.

For the family $\{\rho_\lambda\}_{\lambda\in[0,1]}$ we have, by assumption, $\tr(\rho_\lambda H)=E_0$ for all $\lambda$. Therefore, Theorem~\ref{Thm:heat-detectability} applies on the same fixed-energy slice for every $\lambda$, and gives $\rho_\lambda$ is detected if and only if $S(\rho_\lambda)<S_{\min}^{\stab|E_0}.$ Equivalently,
\begin{equation}
    \rho_\lambda \text{ is detected }
    \iff
    \|r_\lambda\|_2 > \max\qty{\|\v s\|_2:\|\v s\|_1\leq 1,\ \v h\cdot \v s = E_0}.
\end{equation}
Since $S(\rho_\lambda)$ is strictly increasing in $\lambda$, the set of detected states is an initial interval in $\lambda$. Hence, the heat-based witness is tight on this family if and only if the detection threshold coincides with the true stabilizer threshold $\lambda_\star$, namely if and only if $S(\rho_{\lambda_\star}) = S_{\min}^{\stab|E_0}$. Using again that, for qubits, the entropy is a strictly decreasing function of the Bloch radius, this is equivalent to
\begin{equation}
    \|\v r_{\lambda_\star}\|_2
    =
    \max\qty{\|\v s\|_2:\|\v s\|_1\leq 1,\ \v h\cdot \v s = E_0},
\end{equation}
which proves Corollary~\ref{Cor:optimality}.

\section{Stabilizer ground state energy\label{app:stabgroundenergy}}
In this Appendix, we review the results of \cite{sun2025stabilizerground}, which states that $E_\mathrm{STAB}(H)$ can be obtained by performing a restricted optimization instead of computing the energy of every vertex of the stabilizer polytope.

First, we note that, given a maximal stabilizer group $\mathbf Q \subseteq \mathcal P_n$, it is easy to compute its energy. Define the stabilizer character of $\mathbf Q $ as, given $P \in \mathcal P_n$ :
\begin{equation}
    \chi_{\mathbf S}(P):=
    \begin{cases}
        +1, & P\in\mathbf S,\\
        -1, & -P\in\mathbf S,\\
        0,  & \text{otherwise},
    \end{cases}
\end{equation}
such that $\langle \mathbf Q | H | \mathbf Q \rangle = \chi_\mathbf Q(P)$. Then:
\begin{align}
\bra{\mathbf Q}H \ket{\mathbf Q} &=  - \sum_{P \in  \mathbf P(H)}w_P \;\chi_\mathbf Q(P)\;. \label{Eq:stabilizerenergy}
\end{align}
As a consequence, we have the following result:
\begin{lem} [Theorem 1 of \cite{sun2025stabilizerground}]
    The stabilizer ground state energy problem can be expressed as:
    \begin{equation}
        E_\mathrm{STAB}(H) = \min_{\rho \in \mathrm{STAB}_n} \tr(\rho H) = -\max_{\mathbf Q \in C_{\max}[\mathbf P(H)]}\sum_{P \in\mathbf Q}w_P\;,
    \end{equation}
    where $C_\mathrm{max}[\mathbf P(H)]$ is the set of (maximal) closed commuting subsets of $\mathbf P(H)$:
    \begin{equation}
        C_{\max}[\mathbf P(H)] = \{\mathbf Q \subseteq \mathbf P(H) | \mathbf Q =\langle \mathbf Q \rangle \cap \mathbf P(H), \; -1 \notin \langle \mathbf Q \rangle\;\} \cap \mathbf P_{\max }(H)\;,
    \end{equation}
    where $\mathbf P_{\max}(H) \equiv \{\mathbf Q \subseteq \mathbf P(H)| \nexists \mathbf Q^\prime \subseteq \mathbf P(H): \mathbf Q \subset \mathbf Q^\prime\}$ are the subsets of $\mathbf P(H)$ which are maximal in cardinality.
    \label{lem:stabilizerenergy}
\end{lem}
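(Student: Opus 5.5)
The proof reduces the minimization over the polytope to a maximization over stabilizer groups and then argues that only the traces of those groups on $\mathbf P(H)$ matter. First, by linearity of $\tr(\rho H)$ and convexity of $\mathrm{STAB}_n$, the minimum is attained at a vertex $\ket{\mathbf S}$ with $\mathbf S$ a maximal stabilizer group. Second, for such a vertex and any $P\in\mathcal P_n$ we have $\bra{\mathbf S}P\ket{\mathbf S}=\chi_{\mathbf S}(P)$. If $\pm P\in\mathbf S$ this is immediate. Otherwise $P$ anticommutes with some element $g\in\mathbf S$, because a maximal Abelian stabilizer group equals its own centralizer up to phases, and then $\bra{\mathbf S}P\ket{\mathbf S}=\bra{\mathbf S}Pg\ket{\mathbf S}=-\bra{\mathbf S}gP\ket{\mathbf S}=-\bra{\mathbf S}P\ket{\mathbf S}=0$. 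Combined with Eq.~\eqref{Eq:stabilizerenergy}, this gives
\begin{equation}
E_\mathrm{STAB}(H)=-\max_{\mathbf S}\Big(\sum_{P\in\mathbf P(H)\cap\mathbf S}w_P-\sum_{P\in\mathbf P(H)\cap(-\mathbf S)}w_P\Big).
\end{equation}

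Next I would show that the maximum can be restricted to the sets $\mathbf Q=\mathbf S\cap\pm\mathbf P(H)$, i.e.\ to signed commuting subsets of the Hamiltonian's Pauli terms, reading $\mathbf P(H)$ as closed under the sign choice as in the main text. Each such $\mathbf Q$ is closed in the required sense: $\langle\mathbf Q\rangle\subseteq\mathbf S$, so $\langle\mathbf Q\rangle\cap\pm\mathbf P(H)\subseteq\mathbf Q$, and $-\iden\notin\langle\mathbf Q\rangle$. Conversely, take any closed commuting $\mathbf Q$ with $-\iden\notin\langle\mathbf Q\rangle$. Extend $\langle\mathbf Q\rangle$ to a maximal stabilizer group $\mathbf S$ by the standard completion of a stabilizer group to a rank-$n$ one. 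The delicate point is to pick this extension so that it adds no further elements of $\pm\mathbf P(H)$, above all no $-P$ with $P\in\mathbf Q$-sign conflicts. The energy of $\ket{\mathbf S}$ is then exactly $-\sum_{P\in\mathbf Q}w_P$.

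The main obstacle is this extension step. Completing $\langle\mathbf Q\rangle$ may force in extra Pauli terms of $H$, possibly with negative sign, and then the value would drop. I would handle it by an averaging argument instead of a careful choice. Take the code space $V_{\langle\mathbf Q\rangle}$ and its maximally mixed state $\Pi/\tr\Pi$, which lies in $\mathrm{STAB}_n$ because it is a uniform mixture of the stabilizer states obtained by completing $\langle\mathbf Q\rangle$ in all sign patterns. For this state, $\tr(P\Pi)/\tr\Pi=\chi_{\langle\mathbf Q\rangle}(P)$: elements outside $\pm\langle\mathbf Q\rangle$ average to zero, and closedness ensures that the terms of $H$ lying in $\pm\langle\mathbf Q\rangle$ are exactly $\mathbf Q$ with the correct sign. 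Hence some vertex in the mixture reaches energy at most $-\sum_{P\in\mathbf Q}w_P$.

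This yields two inequalities. Every vertex is bounded below by $-\sum_{\mathbf Q}w$ for some closed $\mathbf Q$, after dropping the negatively counted terms, which only raises the energy. Every closed $\mathbf Q$ is attained or beaten by some vertex. Finally, since all $w_P>0$, enlarging $\mathbf Q$ within the closed commuting family never decreases $\sum w_P$, so the maximum may be restricted to the maximal elements $C_{\max}[\mathbf P(H)]$.
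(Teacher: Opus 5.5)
\textbf{There is a genuine gap, in the achievability direction.}

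Your lower bound is sound with the lemma's own notion of closedness. For a vertex, dropping the $+w_P$ contributions of $\mathbf S\cap-\mathbf P(H)$ gives $\bra{\mathbf S}H\ket{\mathbf S}\ge-\sum_{P\in\mathbf S\cap\mathbf P(H)}w_P$. Moreover $\mathbf S\cap\mathbf P(H)=\langle\mathbf S\cap\mathbf P(H)\rangle\cap\mathbf P(H)$, so this set is closed.

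The failing step is in your averaging argument. You claim that closedness makes every term of $H$ lying in $\pm\langle\mathbf Q\rangle$ appear with the correct sign. But $\mathbf Q=\langle\mathbf Q\rangle\cap\mathbf P(H)$ only keeps other elements of $\mathbf P(H)$ out of $\langle\mathbf Q\rangle$. It does not keep out $-P$ for $P\in\mathbf P(H)$, and such a term contributes $+w_P$ to $\tr(H\Pi)/\tr\Pi$. Your averaging does kill the terms outside $\pm\langle\mathbf Q\rangle$, but it cannot touch these.

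No argument can close this, because the identity is false as written. Take $H=-Z_1-Z_2+Z_1Z_2$, i.e.\ $\mathbf P(H)=\{Z_1,Z_2,-Z_1Z_2\}$ with unit weights.
\begin{itemize}
\item $\{Z_1,Z_2\}$ is closed, since $\langle Z_1,Z_2\rangle\cap\mathbf P(H)=\{Z_1,Z_2\}$, and it does not generate $-\iden$.
\item It is maximal, since adding $-Z_1Z_2$ produces $-\iden$. So the right-hand side is at most $-2$.
\item But $H$ is diagonal with smallest eigenvalue $-1$, so $E_{\mathrm{STAB}}(H)=-1$. Indeed, every stabilizer group containing $Z_1,Z_2$ also contains $Z_1Z_2=-(-Z_1Z_2)$.
\end{itemize}

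Reading closedness relative to $\pm\mathbf P(H)$, as you partly do, repairs the averaging but breaks the lower bound, because $\mathbf S\cap\mathbf P(H)$ need not be closed in that stronger sense. For example, $H=-10Z_1-10Z_2+Z_1Z_2$ has $E_{\mathrm{STAB}}(H)=-19$. Yet the $\pm$-closed subsets of $\mathbf P(H)$ are at most singletons, so the formula gives $-10$.

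\textbf{Comparison with the paper.} The paper's proof has the same skeleton as yours: vertex reduction, the character formula, the map $\mathbf S\mapsto\mathbf S\cap\mathbf P(H)$, and positivity of the weights for maximality. It also has the same hole, only less visible. It ``restricts'' to vertices with $\mathbf S\cap-\mathbf P(H)=\varnothing$ and asserts, without proof, that $\mathbf S\mapsto\mathbf S\cap\mathbf P(H)$ is surjective and energy-preserving onto $C[\mathbf P(H)]$. The first example refutes exactly this.

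What your two directions do prove, with equality on both sides, is the signed statement
\[
E_{\mathrm{STAB}}(H)=-\max_{\mathbf Q}\sum_{P\in\mathbf P(H)}w_P\,\chi_{\langle\mathbf Q\rangle}(P),
\]
with the maximum over commuting $\mathbf Q\subseteq\pm\mathbf P(H)$ such that $-\iden\notin\langle\mathbf Q\rangle$. Here negatively signed members subtract their weight. In this form your final step, that enlarging $\mathbf Q$ never decreases the objective, is no longer automatic. To restrict to maximal sets you would separate the unsigned closed commuting subset from the choice of signs. You would then average over sign extensions, using closedness to ensure that each newly added term averages to zero.
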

\begin{proof}
    We start by expressing the reducing the minimization to the maximal stabilizer groups:
    \begin{equation}
        E_\mathrm{STAB}(H) = \min_{\rho \in \mathrm{STAB}_n} \tr(\rho H) =\min_{\mathbf S \in \mathrm{ext}(\mathrm{STAB}_n)} \bra{\mathbf S} H \ket{\mathbf S}\;.
    \end{equation}
    Then, note that Eq. (\ref{Eq:stabilizerenergy}) can be expressed as:
    \begin{equation}
        \bra{\mathbf S} H \ket{\mathbf  S} = - \sum_{P \in \mathbf S \cap \mathbf P(H)} w_P + \sum_{P \in \mathbf S \cap -\mathbf P(H)}w_P\;,
        \label{Eq:groupstabenergy}
    \end{equation}
    where $-\mathbf P(H)$ is the set of Paulis of $H$ with signs interchanged, where, of course, $\mathbf P(H) \cap -\mathbf P(H) = \varnothing$. Due to the positivity of the coefficients, energy of $\mathbf S$ is minimized if $|\mathbf S \cap -\mathbf P(H)|=0$, on which we restrict the optimization. Furthermore, note that given two $\mathbf S, \mathbf S^\prime \in \mathrm{ext}(\mathrm{STAB}_n)$ such that $\mathbf S \cap \mathbf P(H) = \mathbf S^\prime \cap \mathbf P(H)$, their energies are the same, since then $\bra{\mathbf S} H \ket{\mathbf S} = \bra{\mathbf S^\prime} H \ket{\mathbf S^\prime}$.
    
    Note that, given $\mathbf S \in \mathrm{ext}(\mathrm{STAB}_n)$, the corresponding intersection $\mathbf Q =   \mathbf S \cap \mathbf P(H_S)$ must generate a stabilizer group, since (1) $\mathbf Q$ is composed of commuting elements, and (2) $-I \notin \langle \mathbf Q \rangle$, since $\langle \mathbf Q\rangle \subseteq \mathbf S $ is a subgroup. Then, we show that optimization can be restricted to the set:
    \begin{equation}
       \mathrm{STAB}(H) \equiv \{\mathbf S \in \mathrm{ext}(\mathrm{STAB}_n)| \langle \mathbf S \cap \mathbf P(H)\rangle \mathrm{\;is\;stabilizer}\}\;.
    \end{equation}
    Consider now the set:
    \begin{equation}
        C[\mathbf P(H)] = \{\mathbf Q \subseteq \mathbf P(H)|\mathbf Q = \langle \mathbf Q \rangle \cap \mathbf P(H), \; -1 \notin \langle \mathbf Q \rangle\}\;,
    \end{equation}
    We claim that there is a surjective map $\mathrm{STAB}(H) \to C[\mathbf P(H)]$,
    constructed as follows:
    \begin{enumerate}
        \item Consider $\mathbf S \in \mathrm{STAB}(H)$, and let $ \mathbf S \mapsto \mathbf Q =\mathbf S \cap \mathbf P(H) \subseteq \mathbf P(H)$ be the action of the map. Note that: (1) since  $\langle \mathbf S \cap \mathbf P(H)\rangle \subseteq \mathbf S$, it follows that $\langle \mathbf Q \rangle \cap \mathbf P(H) =  \mathbf Q$. (2) Since $\langle \mathbf Q \rangle \subseteq \mathbf S$, $-1 \notin \langle \mathbf Q\rangle$. Hence, we have $\mathbf Q \in C[\mathbf P(H)]$.
        \item It is energy-preserving, due to the property that if two stabilizer groups have the same intersection with $\mathbf P(H)$, they have the same energy. Hence,
        \begin{equation}
            E_\mathrm{STAB}(H) = \min_{\mathbf S \in \mathrm{STAB}(H)} \bra{\mathbf S} H \ket{\mathbf S} = \min_{\mathbf Q \in C[\mathbf P(H)]} \left(-\sum_{P \in \mathbf Q}w_P\right) = -\max_{\mathbf Q \in C[\mathbf P(H)]} \sum_{P \in \mathbf Q}w_P\;,
        \end{equation}
        where we have used Eq. (\ref{Eq:groupstabenergy}). 
    \end{enumerate}
    To finish the proof, we need to show maximality. But this follows from the positivity of the coefficients in the expression above: Let $\mathbf Q_1, \mathbf Q_2 \in C[\mathbf P(H)]$, such that $\mathbf Q_1 \subseteq \mathbf Q_2$. Then, it follows that:
    \begin{equation}
        \sum_{P \in \mathbf Q_1}w_P \leq \sum_{P \in \mathbf Q_2}w_P\;,
    \end{equation}
    since all $w_P$ are positive. Therefore, optimization can be restricted to $C_{\max}[\mathbf P(H)]$.
\end{proof}

Hence, a straightforward algorithm is avaliable to estimate the stabilizer ground state energy in $|C[\mathbf P(H)]| \times |\mathbf P(H)|$ time: We run through all the stabilizer groups the Hamiltonian generates, and evaluate the energy of each of then, and we take the minimum. In fact, it was also shown that there is a constant $c=O(1)$ such that $| C[\mathbf P(H)]| = O(\exp(c n \ln n))$. This is polynomially better (modulo $log$ factors) than the $|\mathrm{ext}(\mathrm{STAB}_n)|=2^{O(n^2)}$ extremal points of the stabilizer polytope \cite{aaronson2004improved} one must go through in direct optimization. However, this is a worst-case scenario: For $k-$local Hamiltonians in one dimension, they show a $O(n \exp(ck \log k))$ runtime algorithm to compute the minimum, which is linear time for $k=O(1)$. 

\section{Stabilizer gap with perturbations: Degenerate case \label{app:degeneratestabilizergap}}
We show here a similar version of Lemma \ref{lem:perturb-stabilizer}, when we allow the perturbed stabilizer Hamiltonian to have a degenerate ground space. Given $\mathbf S \subseteq \mathcal P_n$ a stabilizer group, denote a corresponding stabilizer Hamiltonian defined by a generating set as Eq. (\ref{Eq:stabilizer-Hamiltonian}) by $H_\mathbf S$. First, we show that under certain natural conditions, there is a quantum state that has energy lower than the one provided by a state in the code of the unperturbed Hamiltonian:
\begin{lem}
    Let $H =H_\mathbf Q+ V$ be a Hamiltonian, with $\mathbf Q \subseteq \mathcal P_n$ be a stabilizer group. Denote $V_\perp = -\sum_{P \in \mathbf P(V)\backslash C(\mathbf Q)}w_P P\neq 0$ as the part of the pertubation $V$ whose Paulis are not in the centralizer of $\mathbf Q$. Then, if:
    \begin{itemize}
        \item There is quantum state $|\psi^\prime_\mathbf Q\rangle \in V_\mathbf Q$ such that $\langle V_\perp^2\rangle_{\psi^\prime_\mathbf Q} \neq 0$,
        \item The perturbation is weak enough, $\sum_{P \in \mathbf P(V) \backslash \mathbf Q} w_P < \Delta_\mathrm{gap}-2\sum_{P \in \mathbf P(V) \cap \mathbf Q}w_P$, where $\mathrm{\Delta}_\mathrm{gap} = 2\min_{P \in \mathrm{gen}(\mathbf Q)} w_P$ is the energy gap of $H_\mathbf Q$.
    \end{itemize}
    Then, there is a $\lambda_\mathrm{max} >0 $ such that the quantum state defined through $|\psi^\prime_\mathbf Q(\lambda)\rangle \propto |\psi_\mathbf Q^\prime\rangle-\lambda V_\perp |\psi^\prime_\mathbf Q\rangle$ such that $\Delta E(\lambda) =\langle H \rangle_{\psi^\prime_\mathbf Q}-\langle \psi^\prime_\mathbf Q(\lambda)|H|\psi^\prime_\mathbf Q(\lambda)\rangle $ is strictly positive and monotonically increasing for all $0 < \lambda < \lambda_\mathrm{max}$.
    \label{lem:degperturbstabilizer}
\end{lem}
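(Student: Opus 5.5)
We will compute the energy of the trial state as an explicit rational function of $\lambda$ and show that its derivative is negative near $\lambda=0$.

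\emph{Setup.} Write $\ket{\psi}:=\ket{\psi'_{\mathbf Q}}\in V_{\mathbf Q}$ and split $V=V_\parallel+V_\perp$. Here $V_\parallel$ collects the Paulis of $V$ lying in the centralizer $C(\mathbf Q)$. Since $V_\parallel$ commutes with every element of $\mathbf Q$, it preserves the code space $V_{\mathbf Q}$. Since every Pauli $P$ in $V_\perp$ anticommutes with some generator of $\mathbf Q$, $P$ maps $V_{\mathbf Q}$ into a syndrome space orthogonal to $V_{\mathbf Q}$. Hence $\ket{\phi}:=V_\perp\ket{\psi}\perp V_{\mathbf Q}$, and in particular $\bra{\psi}V_\perp\ket{\psi}=0$ and $\langle\phi|\phi\rangle=\langle V_\perp^2\rangle_\psi=:v>0$ by the first hypothesis. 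Let $E_0:=\langle H\rangle_\psi=-\sum_{g}w_g+\langle V_\parallel\rangle_\psi$.

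\emph{Energy of the trial state.} Set $\ket{\psi(\lambda)}=(\ket\psi-\lambda\ket\phi)/\sqrt{1+\lambda^2 v}$. Expanding gives
\begin{equation}
\langle H\rangle_{\psi(\lambda)}=\frac{E_0-2\lambda\,\mathrm{Re}\bra{\psi}H\ket{\phi}+\lambda^2\bra{\phi}H\ket{\phi}}{1+\lambda^2 v}.
\end{equation}
We have $\bra\psi H_{\mathbf Q}\ket\phi=0$, because $H_{\mathbf Q}$ is diagonal in the syndrome decomposition. We also have $\bra\psi V_\parallel\ket\phi=0$, since $V_\parallel\ket\psi\in V_{\mathbf Q}\perp\ket\phi$. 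Therefore $\bra\psi H\ket\phi=\bra\psi V_\perp\ket\phi=v$. It follows that
\begin{equation}
\Delta E(\lambda)=E_0-\langle H\rangle_{\psi(\lambda)}=\frac{\lambda\,[\,2v-\lambda(\bra\phi H\ket\phi-vE_0)\,]}{1+\lambda^2 v},
\end{equation}
which is $2v\lambda+O(\lambda^2)$. Its derivative at $0$ is $2v>0$. By analyticity of this rational function, there is a $\lambda_{\max}>0$ on which $\Delta E$ is positive and strictly increasing. An explicit choice is obtained by requiring $\Delta E'(\lambda)>0$, which reduces to a quadratic inequality in $\lambda$ with coefficients $v$ and $K:=\bra\phi H\ket\phi/v-E_0$.

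\emph{Role of the weak-perturbation hypothesis (main obstacle).} The algebra above works for any $K$. The expected difficulty is that $\lambda_{\max}$ must be controlled, and the state should beat the \emph{whole} code rather than just $\ket\psi$. We will use the second hypothesis to show $K>0$, i.e., that the excited component $\ket\phi/\sqrt v$ has energy above $E_0$. The excited component lies in syndrome sectors with $H_{\mathbf Q}$-energy at least $-\sum_g w_g+\Delta_{\mathrm{gap}}$. The Paulis of $V$ outside $\mathbf Q$ change this by at most $\sum_{P\in\mathbf P(V)\setminus\mathbf Q}w_P$ in operator norm. The terms of $V$ inside $\mathbf Q$ act as $\pm w_P$ on each sector, so they contribute a shift of at most $2\sum_{P\in\mathbf P(V)\cap\mathbf Q}w_P$ relative to the code. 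Under the hypothesis these bounds give $K>0$. This sign gives a clean expression $\lambda_{\max}=\bigl(\sqrt{K^2+4v}-K\bigr)/(2Kv)$, the positive root of $Kv\lambda^2+K\lambda-1=0$ (which is where $\Delta E'$ changes sign). For $\lambda<\lambda_{\max}$ both monotonicity and positivity then follow directly from the closed form, using $\Delta E(0)=0$.
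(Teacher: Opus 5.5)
Your argument is correct, and its computational core matches the paper's. You use the same trial state, the same cross-term identity $\bra{\psi}H\ket{\phi}=\langle V_\perp^2\rangle_\psi=v$, and the same closed form $\Delta E(\lambda)=(2v\lambda-a\lambda^2)/(1+v\lambda^2)$ with $a=\bra{\phi}H\ket{\phi}-vE_0=Kv$. You differ only in the last step.

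The paper first uses the weak-perturbation hypothesis to argue $a>0$. It then reads off $\lambda_{\max}=\min\bigl(2v/a,\,(\sqrt{a^2+4v^3}-a)/(2v^2)\bigr)$ from the numerator $2v-2a\lambda-2v^2\lambda^2$ of $\Delta E'$. You use only $\Delta E(0)=0$ and $\Delta E'(0)=2v>0$, which gives the interval for either sign of $a$.

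Your route is the more robust one, for two reasons:
\begin{itemize}
\item It shows the second hypothesis is not needed for this lemma at all. That hypothesis is genuinely used only in the next lemma, to get $E_{\mathrm{STAB}}(H)=\langle H\rangle_{\psi'_{\mathbf Q}}$.
\item It avoids the paper's assertion that Paulis of $V$ in $C(\mathbf Q)\setminus\mathbf Q$ have vanishing expectation on the code state. That assertion fails for a general code state.
\end{itemize}
The paper's route buys an explicit $\lambda_{\max}$. Its second entry alone already works for any sign of $a$, since monotonicity together with $\Delta E(0)=0$ implies positivity.

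Two points in your last paragraph should be corrected.

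First, drop the claim that the hypothesis forces $K>0$. Your estimate ignores the logical part of $\langle V_\parallel\rangle_\psi$ contained in $E_0$. Take $\mathbf Q=\langle Z_1Z_2\rangle$, $H_{\mathbf Q}=-wZ_1Z_2$, $V=-\mu Z_1-\epsilon X_1$ and $\ket{\psi}=\ket{11}$. Then $E_0=\mu-w$ and $\bra{\phi}H\ket{\phi}/v=w-\mu$, so $K=2(w-\mu)$. This is negative whenever $w<\mu<2w-\epsilon$, even though $\mu+\epsilon<\Delta_{\mathrm{gap}}=2w$ holds.

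Second, your explicit constant is wrong. Since $\Delta E'\propto 1-K\lambda-v\lambda^2$, the sign change occurs at the positive root of $v\lambda^2+K\lambda-1=0$, not $Kv\lambda^2+K\lambda-1=0$. That root is $(\sqrt{K^2+4v}-K)/(2v)$, which is the paper's second entry with $a=Kv$ and is positive for every real $K$. Your value is this divided by $K$, so it overshoots the monotonicity region when $0<K<1$.

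Neither point affects your derivative-at-zero argument, which by itself proves the lemma.
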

\begin{proof}
    We first start by decomposing the Hamiltonian $H$ into:
\begin{equation}
    H =H_\mathbf{Q} + V = H_{\mathbf Q} - \sum_{P \in \mathbf P(V)\cap C(\mathbf Q)}w_P P - \sum_{P \in \mathbf P(V)\backslash C(\mathbf Q)}w_P P\;,
\end{equation}
where we denoted $C(\mathbf Q) \equiv \{\P \in \mathcal P_n \;|\; [P, Q] =0, \; \forall Q \in \mathbf Q\}$ as the Pauli centralizer of $\mathbf Q$. This separation is relevant,  since it splits the perturbation into the terms that preserve the ground state subspace and another which lifts it. Denoting $\mathbf P_\parallel(\mathbf Q) = \mathbf P(V) \cap C(\mathbf Q)$ and $\mathbf P_\perp(V) = \mathbf P(V) \backslash C(\mathbf Q)$ with:
\begin{equation}
     V_{\parallel} \equiv - \sum_{P \in \mathbf P_\parallel(V)} w_P P\quad ;\quad V_\perp \equiv -\sum_{P \in \mathbf P_\perp(V)}w_P P\;.
\end{equation}
 We know that $H_\mathbf Q$ stabilizes a code, with a basis $V_\mathbf Q = \mathrm{span}_\mathbb C \left \{|\psi_\mathbf Q^1 \rangle , |\psi^2_\mathbf Q \rangle, \cdots, |\psi^{2^k}_\mathbf Q \rangle \right \}$, where $k=n-\mathrm{rank}(\mathbf Q)$, associated to a projector:
 \begin{equation}
     \Pi_\mathbf Q= \frac{1}{|\mathbf Q|}\sum_{P \in \mathbf Q} P\;.
 \end{equation}
 We are going to consider the subspace decomposition $\mathcal H = V_\mathbf Q \oplus V_\mathbf Q^\perp$. We can now move to the construction of the corresponding variational state. Let $|\psi^\prime_\mathbf Q\rangle  =\sum_{a=1}^{2^k} c_a |\psi^a_\mathbf Q \rangle \in V_\mathbf Q$ be the code state satisfying $\langle V^2_\perp\rangle_{\psi^\prime_\mathbf Q} \neq 0$, by assumption. More specifically, we have:
\begin{equation}
\langle V^2_\perp\rangle_{\psi^\prime_\mathbf Q} = \sum_{\substack{P, P^\prime \in \mathbf P_-(V)\\ PP^\prime \in C(\mathbf Q)}} w_P w_{P^\prime} \langle PP^\prime \rangle_{\psi^\prime_\mathbf Q} >0\;.
\end{equation}
One can construct the following variational state:
\begin{equation}
    |\psi_\mathbf Q^\prime(\lambda)\rangle  \equiv \frac{|\psi^\prime _\mathbf Q\rangle-\lambda V_\perp |\psi_\mathbf Q^\prime\rangle }{\sqrt{1+ \lambda^2\langle V^2_\perp\rangle_{\psi^\prime_\mathbf Q} }}\;.
    \label{eq:ansatzstate}
\end{equation}
Note that $V_\perp |\psi_\mathbf Q^\prime \rangle \in V_\mathbf Q^\perp$, since, given $P \notin C(\mathbf Q)$, $P|\psi_\mathbf Q\rangle \notin V_\mathbf Q$, due to the fact that $P$ must anticommute with at least one generator. In the following, we need to guarantee that the state $|\xi \rangle \equiv V_\perp |\psi^\prime_\mathbf Q\rangle/\sqrt{\langle V_\perp^2 \rangle_{\psi^\prime_\mathbf Q}}$ has a higher energy than $|\psi^\prime_\mathbf Q\rangle$, that is:
\begin{equation}
    \frac{\langle V_\perp H V_\perp \rangle_{\psi^\prime_\mathbf Q}}{\langle V_\perp^2\rangle_{\psi^\prime_\mathbf Q}} > \langle H \rangle_{\psi^\prime_\mathbf Q}\;.
    \label{Eq:variationalstabilityproof}
\end{equation}
Let us show that this is indeed the case if the perturbation is weak enough. Denoting $E_0=-\sum_{G \in \mathrm{gen}(\mathbf Q)}w_G$ as the ground space energy, and $\Delta_\mathrm{gap}$ as the energy gap, we know that:
\begin{equation}
    \frac{\langle V_\perp H_\mathbf Q V_\perp\rangle_{\psi^\prime_\mathbf Q}}{\langle V_\perp^2\rangle_{\psi^\prime_\mathbf Q}} -E_0\geq \Delta_\mathrm{gap}\;,
    \label{Eq:gaplowerbound}
\end{equation}
since $V_\perp$ lifts the ground state space to a excited state of $H_\mathbf Q$. The subtle term is the interactions. First, note that we can write the code-preserving interactions as:
\begin{equation}
    V_\parallel = -\sum_{P \in \mathbf P_\parallel(V)}w_P P = -\sum_{P \in \mathbf P(V) \cap \mathcal L(\mathbf Q)} w_P P - \sum_{P \in \mathbf P(V) \cap \mathbf Q} w_P P\;,
\end{equation}
where $\mathcal L(\mathbf Q) = C(\mathbf Q) \backslash \mathbf Q$ are called the logical Paulis of $\mathbf Q$,and they act non-trvially in the code subspace, and thus this first term vanishes when taking the expectation value with $\psi^\prime_\mathbf Q$. Therefore:
\begin{equation}
    \langle V \rangle_{\psi^\prime_\mathbf Q} =  \underbrace{\langle V_\perp \rangle_{\psi^\prime_\mathbf Q}}_{=0} + \langle V_\parallel\rangle_{\psi^\prime_\mathbf Q} = - \sum_{P \in \mathbf P(V)\cap \mathbf Q} w_P.
\end{equation}
Then, by Eq. (\ref{Eq:gaplowerbound}), we have:
\begin{equation}
    \frac{\langle V_\perp H V_\perp \rangle_{\psi^\prime_\mathbf Q}}{\langle V_\perp^2\rangle_{\psi^\prime_\mathbf Q}} -\langle H \rangle_{\psi^\prime_\mathbf Q} \geq \Delta_\mathrm{gap} + \langle V\rangle_\xi -\langle V\rangle_{\psi^\prime_\mathbf Q}\;,
\end{equation}
and since $\langle V\rangle_\xi \geq -\|V\|\geq - \sum_{P \in \mathbf P(V)}|w_P|$, we have:
\begin{equation}
    \frac{\langle V_\perp H V_\perp \rangle_{\psi^\prime_\mathbf Q}}{\langle V_\perp^2\rangle_{\psi^\prime_\mathbf Q}} -\langle H \rangle_{\psi^\prime_\mathbf Q}  \geq \underbrace{\Delta_\mathrm{gap}-\sum_{P \in \mathbf P(V) \cap \mathbf Q} w_P - \sum_{P \in \mathbf P(V)}w_P}_{=\Delta_\mathrm{gap}-2\sum_{P \in \mathbf P(V) \cap \mathbf Q} w_P - \sum_{P \in \mathbf P(V)\backslash \mathbf Q}w_P  }\;,
    \label{eq:fluctuationenergydiff}
\end{equation}
explicitly positive under the assumption of the Lemma.

\noindent 
Now we will show that that there exists $\lambda > 0$ such that $\psi_\mathbf Q^\prime (\lambda)$ will hold a lower variational energy of $H$ compared to $\psi_\mathbf Q^\prime$. Computing the variational energy:
\begin{align}
    E(\lambda) &= \langle \psi_\mathbf Q^\prime (\lambda) |H| \psi_\mathbf Q^\prime(\lambda) \rangle =  \frac{1}{1+ \lambda^2\langle V_\perp^2\rangle_{\psi^\prime_\mathbf Q}} \left[\langle H \rangle_{\psi^\prime_\mathbf Q}- 2\lambda \mathrm{Re} \langle V_\perp H \rangle_{\psi_\mathbf Q^\prime}  + {\lambda^2} \langle V_\perp H V_\perp\rangle_{\psi^\prime_\mathbf Q}\right]\\
    &= \frac{1}{1+ \lambda^2 \langle V_\perp^2\rangle} \left[\langle H \rangle_{\psi^\prime_\mathbf Q}- 2\lambda \langle V_\perp^2 \rangle_{\psi_\mathbf Q^\prime}  + {\lambda^2} \langle V_\perp H V_\perp\rangle_{\psi^\prime_\mathbf Q}\right]\;,
\end{align}
where we have used the fact that:
\begin{equation}
     \langle V_\perp H \rangle_{\psi_\mathbf Q^\prime} =\langle V_\perp(1-\Pi_\mathbf Q) H \rangle_{\psi_\mathbf Q^\prime} =  \langle V_\perp(1-\Pi_\mathbf Q) (H_\mathbf Q + V_\parallel + V_\perp) \rangle_{\psi_\mathbf Q^\prime}  =  \langle V_\perp^2\rangle_{\psi^\perp_\mathbf Q}\;.
\end{equation}
Hence, the varational energy difference is given as:
\begin{equation}
    \Delta E(\lambda)=\langle H \rangle_{\psi^\prime_\mathbf Q}-E(\lambda) =\frac{\lambda}{1+ \lambda^2 \langle V_\perp^2\rangle_{\psi^\prime_\mathbf Q}} \left[ 2 \langle V_\perp^2 \rangle_{\psi_\mathbf Q^\prime}  - {\lambda} (\langle V_\perp H V_\perp\rangle_{\psi^\prime_\mathbf Q}-\langle V_\perp^2\rangle_{\psi^\prime_\mathbf Q}\langle H\rangle_{\psi^\prime_\mathbf Q})\right]\;.
\end{equation}
Since Eq. (\ref{Eq:variationalstabilityproof}) is estabilished, we know that the coefficient of the second term is strictly positive. Thus, for:
\begin{equation}
    0 <\lambda <  \frac{2\langle V_\perp^2 \rangle_{\psi^\prime_\mathbf Q}}{\langle V_\perp H V_\perp \rangle_{\psi^\prime_\mathbf Q}- \langle V^2_\perp\rangle \langle H\rangle_{\psi^\prime_\mathbf Q}}\;,
    \label{eq:positiveinterval}
\end{equation}
we have that $\langle H \rangle_{\psi^\prime_\mathbf Q}- E(\lambda) >0$. Notice that $\Delta E(\lambda) = \langle H \rangle_{\psi^\prime_\mathbf Q}-E(\lambda)$ is a rational function of $\lambda$, and since $1+ \lambda^2\langle V^2_\perp\rangle_{\psi^\prime_\mathbf Q} >0$, it follows that it is also smooth. Its first derivative is given by:
\begin{equation}
    \frac{d}{d\lambda}\Delta E(\lambda) =\frac{2v-2a \lambda - 2v^2\lambda^2}{(1+\lambda^2\langle V_\perp^2\rangle_{\psi^\prime_\mathbf Q})^2}\;,
\end{equation}
where $a=\langle V_\perp H V_\perp\rangle_{\psi^\prime_\mathbf Q} -\langle V_\perp^2\rangle \langle H \rangle_{\psi^\prime_\mathbf Q} >0$ and $v =\langle V_\perp^2\rangle_{\psi^\prime_\mathbf Q} >0$. Our objective is to derive the conditions for the positivity of the derivative. The discriminant of the numerator is $4a^2+16v^3 >0$, and there are a positive and a negative root, given by $(-a \pm \sqrt{a^2+4v^3})/2v^2$. For any $\lambda$ between the roots, the numerator is positive and so is $d\Delta E(\lambda)/d\lambda$. Hence, we can define:
\begin{equation}
    \lambda_\mathrm{max} \equiv \min\left(\frac{2v}{a},\frac{\sqrt{a^2+4 v^3}-a}{2v^2} \right)\;,
\end{equation}
such that $\Delta E(\lambda)$ is , by Eq. (\ref{eq:positiveinterval}), and monotonically increasing due to the positivity of the derivative, for all $\lambda \in (0, \lambda_\mathrm{max})$. 
\end{proof}
By using the characterization of stabilizer ground state energies of App. \ref{app:stabgroundenergy}, we can show that this implies the stabilizer gap, under an extra assumption:
\begin{lem}
    Let $H=H_\mathbf Q + V$ satisfying both conditions of Lemma \ref{lem:degperturbstabilizer}, and suppose that $\mathbf P(V)\backslash \mathbf Q$ do not contain any commuting element with $\mathbf Q$. Then, $|\psi^\prime_\mathbf Q(\lambda)\rangle $ satisfies
    \begin{equation}
        E_\mathrm{STAB}(H) -\langle H\rangle_{\psi^\prime_\mathbf Q(\lambda)} > 0\;,
    \end{equation}
    implying in $\delta_\mathrm{STAB}(H)> 0$ for such Hamiltonians, for all $0 < \lambda < \lambda_\mathrm{max}$.
    \label{lem:stabgapguarantee}
\end{lem}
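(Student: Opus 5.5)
The plan is to compare the two sides through the explicit characterization of $E_\mathrm{STAB}(H)$ in Lemma~\ref{lem:stabilizerenergy}. I will show that, under the extra assumption, the stabilizer ground energy equals the energy of the code state $|\psi'_\mathbf Q\rangle$:
\begin{equation}
E_\mathrm{STAB}(H)=\langle H\rangle_{\psi'_\mathbf Q}=E_0-\sum_{P\in\mathbf P(V)\cap\mathbf Q}w_P ,
\end{equation}
where the right-hand side was already computed in the proof of Lemma~\ref{lem:degperturbstabilizer} (logical terms and $V_\perp$ have vanishing expectation on the code). Once this is established, Lemma~\ref{lem:degperturbstabilizer} gives $\Delta E(\lambda)=\langle H\rangle_{\psi'_\mathbf Q}-\langle H\rangle_{\psi'_\mathbf Q(\lambda)}>0$ for $0<\lambda<\lambda_\mathrm{max}$. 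This is exactly the claim $E_\mathrm{STAB}(H)-\langle H\rangle_{\psi'_\mathbf Q(\lambda)}>0$. Since $E_\mathrm{gs}(H)\le\langle H\rangle_{\psi'_\mathbf Q(\lambda)}$, it also yields $\delta_\mathrm{STAB}(H)>0$.

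The inequality $E_\mathrm{STAB}(H)\le\langle H\rangle_{\psi'_\mathbf Q}$ is immediate. Pick any maximal stabilizer group $\mathbf S\supseteq\mathbf Q$ that avoids $-\mathbf P(V)$ on the logical part. Its energy via Eq.~\eqref{Eq:groupstabenergy} is at most $E_0-\sum_{P\in\mathbf P(V)\cap\mathbf Q}w_P$, because every $P\in\mathbf P(V)\setminus\mathbf Q$ lies outside $C(\mathbf Q)$ by hypothesis and hence outside $\mathbf S$, so it contributes $0$. Under the hypothesis there are in fact no logical terms at all, so this holds with equality.

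For the reverse inequality I take an arbitrary pure stabilizer state $|\mathbf S\rangle$ and split its energy into the $H_\mathbf Q$ part and the $V$ part. There are two cases.
\begin{itemize}
\item If $\mathbf Q\subseteq\mathbf S$, then no element of $\mathbf P(V)\setminus\mathbf Q$ lies in $\mathbf S$, since these elements anticommute with some element of $\mathbf Q$. The energy is then exactly $\langle H\rangle_{\psi'_\mathbf Q}$.
\item If $\mathbf Q\not\subseteq\mathbf S$, then some generator $G$ has $\chi_\mathbf S(G)\in\{0,-1\}$. Each such generator raises the $H_\mathbf Q$ energy by at least $w_G\ge\Delta_\mathrm{gap}/2$, or by $2w_G$ when $\chi_\mathbf S(G)=-1$. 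Meanwhile the $V$ part can decrease the energy, relative to the code value, by at most $\sum_{P\in\mathbf P(V)\setminus\mathbf Q}w_P+2\sum_{P\in\mathbf P(V)\cap\mathbf Q}w_P$, which the weakness condition bounds by $\Delta_\mathrm{gap}$.
\end{itemize}

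The main obstacle is this bookkeeping in the second case, because the naive cost $\Delta_\mathrm{gap}/2$ per violated generator is only half of the available gain $\Delta_\mathrm{gap}$. I would first try a sharper count. A $V$-term $P\notin C(\mathbf Q)$ can enter $\mathbf S$ only if every generator anticommuting with $P$ has been removed from $\mathbf S$'s $\pm$-character support. So I would assign each gained weight $w_P$ to a removed generator it anticommutes with. I would also use that the $\mathbf P(V)\cap\mathbf Q$ terms can lose at most $2w_P$ only when some generator is simultaneously flipped to $-1$, costing $2w_G=\Delta_\mathrm{gap}$. If this charging still leaves a factor-two gap, I would fall back on the combinatorial form of Lemma~\ref{lem:stabilizerenergy}: I would compare only maximal closed commuting subsets, and argue that swapping in any subset of $\mathbf P(V)\setminus\mathbf Q$ forces the loss of a generator whose weight dominates the corresponding gain under the stated weakness inequality.
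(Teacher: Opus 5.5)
Your route is the paper's route. Both reduce the claim to $E_\mathrm{STAB}(H)=\langle H\rangle_{\psi'_\mathbf Q}$ and then invoke Lemma~\ref{lem:degperturbstabilizer}. The paper compares $\mathbf S_\mathbf Q=\mathbf P(H)\cap\mathbf Q$ against all other $\mathbf Q'\in C_{\max}[\mathbf P(H)]$; you compare pure stabilizer states, split by whether $\mathbf Q\subseteq\mathbf S$.

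The genuine gap is your second case, and no sharper charging can close it. Passing to the $C_{\max}$ formulation is the same comparison with the same numbers, so that fallback does not help either. Your cost estimate is the right one. A generator with $\chi_\mathbf S(G)=0$ costs exactly $w_G=\Delta_\mathrm{gap}/2$, not $2w_G$, because $\ket{\mathbf S}$ then has only half its weight outside the code. Meanwhile, the weakness hypothesis lets the gain $\sum_{P\in\mathbf P(V)\setminus\mathbf Q}w_P$ come arbitrarily close to $\Delta_\mathrm{gap}$. The paper's proof gets past this point only by asserting that every other $\mathbf Q'$ has $H_\mathbf Q$-energy at least $\Delta_\mathrm{gap}$ above the code, inferred merely from a strict inequality. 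That assertion is exactly the factor of two you flagged, and it is false, so the statement fails as written.

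Concretely, take $n=1$, $\mathbf Q=\langle Z\rangle$, $H_\mathbf Q=-Z$ (so $\Delta_\mathrm{gap}=2$), and $V=-wX$ with $1<w<2$. Every hypothesis holds:
\begin{itemize}
\item $X$ anticommutes with $Z$;
\item $\mathbf P(V)\cap\mathbf Q=\varnothing$, and $w<\Delta_\mathrm{gap}$;
\item $\bra{0}V_\perp^2\ket{0}=w^2\neq0$.
\end{itemize}
Yet $E_\mathrm{STAB}(H)=\bra{+}H\ket{+}=-w<-1=\langle H\rangle_{\psi'_\mathbf Q}$. Here $\mathbf Q'=\{X\}$ costs only $1=\Delta_\mathrm{gap}/2$ in $H_\mathbf Q$-energy but gains $w$. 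Since $\langle H\rangle_{\psi'_\mathbf Q(\lambda)}\to-1$ as $\lambda\to0^+$, we get $E_\mathrm{STAB}(H)-\langle H\rangle_{\psi'_\mathbf Q(\lambda)}\to1-w<0$. So the displayed inequality fails on an initial segment of $(0,\lambda_\mathrm{max})$. The weaker conclusion $\delta_\mathrm{STAB}(H)>0$ happens to survive in this example, since $E_\mathrm{gs}=-\sqrt{1+w^2}$, but not by this argument.

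What is missing is a stronger hypothesis: add $\sum_{P\in\mathbf P(V)\setminus\mathbf Q}w_P<\Delta_\mathrm{gap}/2=\min_{G\in\mathrm{gen}(\mathbf Q)}w_G$. With it, your case split closes immediately. You can also sharpen your gain estimate. The terms in $\mathbf P(V)\cap\mathbf Q$ already sit at their minimum $-w_P$ on the code, so relative to the code value the $V$ part can lower the energy by at most $\sum_{P\in\mathbf P(V)\setminus\mathbf Q}w_P$. Under the added hypothesis this is strictly below the cost $\min_G w_G$ of leaving any generator out of $\mathbf S$.
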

\begin{proof}
    Note that it follows if we show that weak perturbations do not change the stabilizer ground energy with respect to $H_\mathbf Q$, since if $E_\mathrm{STAB}(H)=\langle H\rangle_{\psi^\prime_\mathbf Q}$, the family of states in Eq. (\ref{eq:ansatzstate}) for $\lambda \in (0,\lambda_\mathrm{max})$ satisfies: 
\begin{equation}
        E_\mathrm{STAB}(H)-E(\lambda) =\Delta E(\lambda) >0\;.
\end{equation}
Let us show that is indeed the case. By Lemma \ref{lem:stabilizerenergy}, we minimize over closed commuting subsets. Given $\mathbf Q^\prime \in C_\mathrm{max}(H)$, its energy is given as:
\begin{equation}
    \tr(H \tilde{\Pi}_{\langle \mathbf Q^\prime \rangle}) = \tr(H_\mathbf Q \tilde{\Pi}_{\langle \mathbf Q^\prime \rangle})-\sum_{P \in \mathbf Q^\prime \cap \mathbf P(V)}w_P\;,
\end{equation}
where we will refer to $\tilde \Pi_\mathbf S = |\mathbf S|\Pi_\mathbf S/2^n$ for any stabilizer group $\mathbf S$ as the natural uniform quantum state on its codespace. We want to show that $\langle H\rangle_{\psi^\prime_\mathbf Q}$ is of the form above. Define:
\begin{equation}
\mathbf S_\mathbf Q\equiv \mathrm{gen}(\mathbf Q)\cup(\mathbf P(V)\cap\mathbf Q) =\mathbf P(H) \cap \mathbf Q\subseteq \mathbf P(H)\;.
\end{equation}
First, let us show that this subset indeed is an element of $C_\mathrm{max}[\mathbf P(H)]$. First, note that $\langle \mathbf S_\mathbf Q\rangle \subseteq \mathbf Q$, and from that we can show that it is in $C[\mathbf P(H)]$, since (1) $\mathbf Q$ is a stabilizer group and therefore $-1 \notin \langle \mathbf S_\mathbf Q\rangle$ and (2) this implies that $\mathbf P(H)\cap \langle \mathbf S_\mathbf Q\rangle \subseteq \mathbf P(H)\cap \mathbf Q$, but this implies $\mathbf P(H) \cap \langle \mathbf S_\mathbf Q\rangle \subseteq\mathbf S_\mathbf Q$. Since $\mathbf S_\mathbf Q\subseteq \mathbf P(H) \cap \langle \mathbf S_\mathbf Q\rangle $, it follows that $\mathbf S_\mathbf Q = \mathbf P(H)\cap \langle \mathbf S_\mathbf Q\rangle$.

It remains to show that it is also maximal. But this follows from the following decomposition of $\mathbf P(H)$:
\begin{equation}
    \mathbf P(H) = \mathrm{gen}(\mathbf Q)\cup \mathbf P(V) = \underbrace{\mathrm{gen}(\mathbf Q) \cup \mathbf P(V)\cap \mathbf Q}_{=\mathbf S_\mathbf Q} \sqcup \mathbf P(V)\backslash \mathbf Q\;,
\end{equation}
where the last union is disjoint. By assumption, we assume that no Paulis in $\mathbf P(V)\backslash \mathbf Q$ commutes with $\mathbf Q$, thatis, they are logical. Hence, every Pauli in $\mathbf P(V)\backslash \mathbf Q$ does anticommute with at least one Pauli in $\mathbf S_\mathbf Q$, and thus showing that $\mathbf S_\mathbf Q$ does not admit any commmuting extension. It follows that $\mathbf S_\mathbf Q \in \mathrm C_\mathrm{max}[\mathbf P(H)]$. Furthermore, we note that:
\begin{equation}
\tr(H\tilde{\Pi}_{\langle \mathbf S_\mathbf Q\rangle}) = -\sum_{P \in \mathrm{gen}(\mathbf Q)}1-\sum_{P \in \mathbf P(V)\cap \mathbf Q}w_P=\langle H\rangle_{\psi^\prime_\mathbf Q}\;,
\end{equation}
showing that $\langle H\rangle_{\psi^\prime_\mathbf Q}$ is in the feasible set of the minimization in Lemma \ref{lem:stabilizerenergy}. Let us show that all other closed commuting subsets have higher energy. Assume $\mathbf Q^\prime \neq \mathbf S_\mathbf Q \in C_\mathrm{max}[\mathbf P(H)]$. Consider the energy difference:
\begin{align}
    \tr(H \tilde{\Pi}_{\langle \mathbf Q^\prime \rangle}) - \langle H\rangle_{\psi^\prime_\mathbf Q} &= \left[\tr(H_\mathbf Q\tilde{\Pi}_{\langle \mathbf Q^\prime \rangle})-\langle H_\mathbf Q\rangle_{\psi^\prime_\mathbf Q} \right] - \sum_{P \in \mathbf Q^\prime \cap \mathbf P(V)}w_P + \sum_{P \in \mathbf Q \cap \mathbf P(V)}w_P\\
    & = \tr[H_\mathbf Q(\tilde{\Pi}_{\langle \mathbf Q^\prime \rangle}-\tilde{\Pi}_{\langle \mathbf S_\mathbf Q \rangle} )]- \sum_{P \in \mathbf Q^\prime \cap \mathbf P(V)}w_P + \sum_{P \in \mathbf Q \cap \mathbf P(V)}w_P\;.
    \label{eq:energydiffstab}
\end{align}
Note that $\tr(H_\mathbf Q\tilde{\Pi}_{\langle \mathbf Q^\prime \rangle}) \geq \tr(H_\mathbf Q\tilde{\Pi}_{\langle \mathbf S_\mathbf Q \rangle})$, where equality is attained if the code includes, $V_{\langle \mathbf Q^\prime\rangle}\subseteq V_{\langle \mathbf S_\mathbf Q\rangle}$, implying $\langle \mathbf Q^ \prime \rangle \subseteq \langle \mathbf S_\mathbf Q\rangle$. Taking the intersection with $\mathbf P(H)$, this yields $\mathbf Q^\prime \subseteq \mathbf S_\mathbf Q$. But we know that this cannot happen, since $\mathbf Q^\prime \subset \mathbf S_\mathbf Q$ would imply that $\mathbf Q^\prime$ is not maximal, and also $\mathbf Q^\prime \neq \mathbf S_\mathbf Q$. Thus, $\tr(H_\mathbf Q\tilde{\Pi}_{\langle \mathbf Q^\prime \rangle}) > \tr(H_\mathbf Q\tilde{\Pi}_{\langle \mathbf S_\mathbf Q \rangle})$. It follows that the energy difference is lower bounded by the spectral gap of $H_\mathbf Q$, given by $\Delta_{gap}=2\min_{P \in \mathbf P(H)}w_P$, giving  the inequality:
\begin{equation}
   \tr(H \tilde{\Pi}_{\langle \mathbf Q^\prime \rangle}) - \langle H\rangle_{\psi^\prime_\mathbf Q}\geq \Delta_\mathrm{gap} - \sum_{P \in \mathbf Q^\prime \cap \mathbf P(V)}w_P + \sum_{P \in \mathbf Q \cap \mathbf P(V)}w_P\;,
\end{equation}
Now, we note that we can split:
\begin{equation}
    \sum_{P \in \mathbf Q^\prime \cap \mathbf P(V)}w_P = \underbrace{\sum_{P \in \mathbf Q \cap \mathbf P(V) \cap \mathbf Q^\prime}w_P}_{\leq \sum_{P \in \mathbf Q \cap \mathbf P(V)}w_P} + \underbrace{\sum_{P \in \mathbf P(V)\backslash \mathbf Q\cap\mathbf Q^\prime}w_P}_{\leq \sum_{P \in \mathbf P(V)}w_P}\;,
\end{equation}
where in the equality, we split the elements in the intersection with respect with their membership in $\mathbf Q$, and the bounds in the underbraces are obtained by adding terms in the sum. Substituting back in Eq. (\ref{eq:energydiffstab}), we obtain:
\begin{align}
    \tr(H \tilde{\Pi}_{\langle \mathbf Q^\prime \rangle}) - \langle H\rangle_{\psi^\prime_\mathbf Q} &\geq \Delta_\mathrm{gap} - \sum_{P \in \mathbf Q \cap \mathbf P(V)}w_P -  \sum_{P \in  \mathbf P(V)}w_P + \sum_{P \in \mathbf Q \cap \mathbf P(V)}w_P\\
    &\geq \Delta_\mathrm{gap}-  \sum_{P \in  \mathbf P(V)}w_P\;,
\end{align}
which is $>0$ by assumption, as discussed in Eq. (\ref{eq:fluctuationenergydiff}). Hence, $E_\mathrm{STAB}(H) = \langle H\rangle_{\psi^\prime_\mathbf Q}$.
\end{proof}

\section{Proof of Lemma \ref{lem:sumofstabilizers} \label{app:proofsumofstabilizers}}
Define $G_{1, \cdots,\ell}$ be the $\ell-$partite commutativity graph of the generators of the corresponding groups, whose vertices decompose as $V(G_{1,\cdots, \ell}) = \cup_{j=1}^m V_j$ with $V_j=\mathrm{gen}(\mathbf S_j)$ and edges are defined if two generators of different groups do not commute. Define the set of maximal independent subsets as:
\begin{equation}
    I_{\max}(G_{1, \cdots, \ell}) =\{\mathbf Q \subseteq  V(G_{1, \cdots, \ell})|\;\forall i,j \in \mathbf Q: (i,j) \notin E(G_{1,\cdots, \ell}) \;\&\;\mathbf Q\mathrm{\;is\;maximal}\}\;,
\end{equation}
where maximality of $\mathbf Q \in I_{\max}(G_{1, \cdots, \ell}) $ refers to the property that there is no other subset of the vertices $\mathbf Q^\prime$ on which $\mathbf Q \subsetneq \mathbf Q^\prime$ satisfying independence.

We claim that this set is one-to-one with $C_{\max}[\mathbf P(H)]$. Since, in this case we have:
\begin{equation}
    \mathbf P(H) =\bigsqcup_{j=1}^\ell \mathrm{gen}(\mathbf S_j) = V(G_{1,\cdots,\ell})\;,
\end{equation}
the two directions can be analyzed:
\begin{enumerate}
    \item Let $\mathbf Q \in C_{\max}[\mathbf P(H)]$. Then, the corresponding vertex set must be independent, since $\mathbf Q$ is commuting, and also maximal, since maximality as a closed commuting subset does also correspond to maximality as a independent set in the graph $G_{1, \cdots, \ell}$.
    \item Given $\mathbf Q \in I_{\max}(G_{1, \cdots, \ell})$, we know that (1) $-1 \notin \langle \mathbf Q \rangle$, since $\mathbf Q \subseteq \mathrm{gen}(\mathbf S_j)$, and thus $\langle \mathbf Q\rangle \subseteq \mathbf S_j$, and (2) it is also closed, due to independence: Since every stabilizer group $\{\mathbf S_j\}_{j=1}^\ell$ cannot be generated as a product of other ones, notice that, by decomposing $\mathbf Q = \sqcup_{j=1}^\ell \mathbf Q_j$:
    \begin{equation}
        \langle \mathbf Q\rangle \cap \mathbf P(H) =\left \langle \bigsqcup_{j=1}^\ell \mathbf Q_j \right \rangle \cap \mathbf P(H) = \bigsqcup_{j=1}^\ell \langle \mathbf Q_j\rangle\cap \mathbf P(H) = \bigsqcup_{j=1}^\ell  \mathbf Q_j\cap \mathbf P(H) = \bigsqcup_{j=1}^\ell  \mathbf Q_j=\mathbf Q\;.
    \end{equation}
\end{enumerate}
Then,
\begin{equation}
    E_\mathrm{STAB}(H) = \max_{\mathbf Q =\sqcup_{j=1}^\ell \mathbf Q_j \in I_{\max}(G_{1,\cdots, \ell})} \sum_{j=1}^\ell |\mathbf Q_j|w_j\;.
\end{equation}
This corresponds to finding the maximum-weight independent set (MWIS) of the graph $G_{1, \cdots, \ell}$ and the weights $\{w_j\}_{j=1}^\ell$. $\square$

\bibliographystyle{apsrev4-2}
\end{document}